\theoremstyle{definition}
\newtheorem{theorem}{Theorem}[section]
\newtheorem{example}{Example}[section]
\newtheorem{lemma}[theorem]{Lemma}
\newtheorem{proposition}[theorem]{Proposition}
\newtheorem{remark}[theorem]{Remark}
\def\@seccntformat#1{\@ifundefined{#1@cntformat}%
	{\csname the#1\endcsname\quad}
	{\csname #1@cntformat\endcsname}
}
\newif\ifShowComments
\def\strutdepth{\dp\strutbox}
\def\druk#1{\strut\vadjust{\kern-\strutdepth
        {\vtop to \strutdepth{%
                \baselineskip\strutdepth\vss
                        \llap{\hbox{#1}\quad}\null}}}}
\title{\bf
Closed-form estimators for an exponential family derived from likelihood equations
}
\author[]{Roberto Vila}
\author[]{Eduardo Nakano}
\author[]{Helton Saulo  \thanks{Corresponding author: heltonsaulo@gmail.com} }
\affil[]{Department of Statistics, University of
	Bras\'ilia, 70910-900, Bras\'ilia, Brazil}
\begin{document}
	\maketitle 
	\begin{abstract}
		{
In this paper, we derive closed-form estimators for the parameters of some probability distributions belonging to the exponential family. A bootstrap bias-reduced version of these proposed closed-form estimators are also derived. A Monte Carlo simulation is performed for the assessment of the estimators. The results are seen to be quite favorable to the proposed bootstrap bias-reduce estimators.
}

	\end{abstract}
	\smallskip
	\noindent
	{\small {\bfseries Keywords.} {Exponential family $\cdot$ Maximum likelihood method $\cdot$ Monte Carlo simulation $\cdot$ \verb+R+ software.}}
	\\
	{\small{\bfseries Mathematics Subject Classification (2010).} {MSC 60E05 $\cdot$ MSC 62Exx $\cdot$ MSC 62Fxx.}}
	
	\section{Introduction}
	\noindent

The maximum likelihood (ML) method is certainly the most popular technique for obtaining point estimates of the parameters of a distribution. However, it is rare for multi-parameter probability distributions to have closed-form expressions for the ML estimators. Although numerical optimization algorithms are used to solve ML estimation problems that have no closed-form expressions, the availability of a closed-form estimator can avoid some difficulties that can occur when using iterative methods. For example, the problems of estimation in the limits of parametric space can cause the non-convergence of the algorithm and the requirement to adopt a multi-start approach increases computing time. In fact, the low computational cost is the main advantage of closed-form estimators, especially in the field of real-time processing.

In this context, the aim of this work is to provide closed-form estimators for the parameters of a parametric set of probability distributions that belongs to the exponential family \citep{Efron2022} of the following form
\begin{align*}
f(x;\psi)
=
h(x)
\exp\left(\sum_{j=1}^{2}\theta_j(\psi)T_j(x)-\log(c(\theta(\psi)))\right),
\quad 
x>0,\  \psi=(\mu,\sigma), \ \mu,\sigma>0,
\end{align*}
where $\theta(\psi)=(\theta_1(\psi),\theta_2(\psi))$, 
$T_1:(0,\infty)\to (0,\infty)$ is a
real strictly monotone twice differentiable function,
\begin{align*}
	h(x)={\vert T_1'(x)\vert\over T_1(x)}, \quad 
	\theta_1(\psi)=-\mu\sigma, \quad 
	\theta_2(\psi)=\mu,  \quad 
	T_2(x)=\log(T_1(x)),  
\end{align*} 
%
%
with
%
$
c(\theta(\psi))
=
{\Gamma(\mu)/(\sigma^\mu\mu^\mu)}
$
being the normalizing constant.
%
In simple terms, $f(x;\psi)$ has the following simple form
\begin{align}\label{pdf-1}
f(x;\psi)
=
{(\mu\sigma)^\mu \over \Gamma(\mu)}\,
{\vert T_1'(x)\vert\over T_1(x)}\,
\exp\left\{-\mu \sigma T_1(x)+\mu\log(T_1(x))\right\},
\quad 
x>0,\  \psi=(\mu,\sigma), \ \mu,\sigma>0.
\end{align}
In the above, $T_1'(x)$ denotes the derivative of $T_1(x)$ with respect to $x$.
Note that a family of distributions equivalent to \eqref{pdf-1} has appeared in \cite{Nascimento2014}.

Table \ref{table:1} presents some examples of generators $T_1(x)$ for use in \eqref{pdf-1}.
\begin{table}[H]
	\caption{Some examples of 
		generators $T_1(x)$ of exponential family \eqref{pdf-1}.}
	\vspace*{0.15cm}
	\centering 
	\renewcommand{\arraystretch}{-0.8}
		\resizebox{\linewidth}{!}{
	\begin{tabular}{lcccrll} 
		\hline
		Distribution & $\mu$ & $\sigma$ & $T_1(x)$ & Parameters
		\\ [0.5ex] 
		\noalign{\hrule height 1.0pt}
		\\ 
		Nakagami \citep{Laurenson1994}
		&  $m$  & ${1\over \Omega}$   & $x^2$ &  $m\geqslant {1\over 2}$, \
		$\Omega>0$ 
		\\ [2.0ex]	
		Maxwell-Boltzmann \citep{Dunbar1982}
		&  ${3\over 2}$  & ${1\over 3\beta^2}$   & $x^2$ &  $\beta>0$ 
		\\ [2.0ex]	
		Rayleigh \citep{Rayleigh1880}
		&  $1$  & ${1\over 2\beta^2}$   & $x^2$ &  $\beta>0$ 
		\\ [2.0ex]		
		Gamma \citep{Stacy1962}
		&  $\alpha$  &  ${1\over\alpha\beta}$ & $x$ &  $\alpha, \beta > 0$ 
		\\ [2.0ex]		
Inverse gamma \citep{Cook2008}
&  $\alpha$  &  ${1\over\alpha\beta}$ & ${1\over x}$ &  $\alpha, \beta > 0$ 
		\\ [2.0ex]
 		$\delta$-gamma \citep{Rahman2014}
		&  ${\beta\over \delta}$  &  ${1\over \beta}$ & $x^\delta$ & 
		$\delta, \beta > 0$ 
		\\ [2.0ex]
		Weibull \citep{Johnson1994}		
		&  $1$ & ${1\over \beta^\delta}$ & $x^\delta$ &  
		$\delta, \beta>0$ 
		\\ [2.0ex]		
		Inverse Weibull (Fréchet) \citep{khan2008}
		& $1$ & ${1\over \beta^\delta}$ & ${1\over x^\delta}$ &  
		$\delta, \beta>0$ 
		\\ [2.0ex] 
		Generalized gamma \citep{Stacy1962}
		& ${\alpha\over \delta}$ & ${\delta\over \alpha \beta^\delta}$ & $x^\delta$ &  $\alpha,\delta, \beta>0$ 
		\\ [2.0ex] 
		Generalized inverse gamma \citep{Lee1991}
		&  ${\alpha\over \delta}$ &  ${\delta\over \alpha\beta^{\delta}}$ & ${1\over x^\delta}$ &  
		$\alpha,\delta, \beta>0$ 
		\\ [2.0ex] 
		New log-generalized gamma$^{*}$
		& ${\alpha\over \delta}$ & ${\delta\over \alpha \beta^\delta}$ & $[\exp(x)-1]^\delta $ & $\alpha,\delta>0$,  $\beta>0$
		\\ [2.0ex] 
New log-generalized inverse gamma$^{*}$
& ${\alpha\over \delta}$ & ${\delta\over \alpha \beta^\delta}$ & $\big[\exp\big({1\over x}\big)-1\big]^\delta $ & $\alpha,\delta>0$,  $\beta>0$			
				\\ [2.0ex] 
		New exponentiated generalized gamma$^{*}$
		& ${\alpha\over \delta}$ & ${\delta\over \alpha \beta^\delta}$ & $\log^\delta(x+1) $ & $\alpha,\delta>0$,  $\beta>0$
					\\ [2.0ex] 
	New exponentiated generalized inverse gamma$^{*}$
	& ${\alpha\over \delta}$ & ${\delta\over \alpha \beta^\delta}$ & $\log^\delta\big({1\over x}+1\big) $ & $\alpha,\delta>0$,  $\beta>0$
						\\ [2.0ex] 
	New modified log-generalized gamma$^{*}$
	& ${\alpha\over \delta}$ & ${\delta\over \alpha \beta^\delta}$ & $\exp^\delta\big(x-{1\over x}\big) $ & $\alpha,\delta>0$,  $\beta>0$
							\\ [2.0ex] 	
	New extended log-generalized gamma$^{*}$
		& ${\alpha\over \delta}$ & ${\delta\over \alpha \beta^\delta}$ & $x^\delta[\exp(x)-1]^\delta$ & $\alpha,\delta>0$,  $\beta>0$
		\\ [2.0ex] 
		Chi-squared \citep{Johnson1994}	
		& ${\nu\over 2}$ & ${1\over \nu}$ & $x$ & $\nu>0$ 
		\\ [2.0ex] 	
		Scaled inverse chi-squared  \citep{Bernardo1993}
		& ${\nu\over 2}$ & $\tau^2$ &  ${1\over x}$ &  $\nu, \tau^2>0$ 
		\\ [2.0ex] 	
		Gompertz \citep{Gompertz1825}
		& $1$ & $\alpha$ &  $\exp(\delta x)-1$ &  $\alpha, \delta>0$ 
		\\ [2.0ex] 		
		Modified Weibull extension \citep{Xie2022}
		& $\lambda\alpha$ & $1$ &  $\exp\big[ \left({x\over\alpha}\right)^\beta\big]-1$ &  $\alpha,\lambda,\beta>0$
		\\ [2.0ex] 		
		Traditional Weibull \citep{Nadarajah2005}
		& $1$ & $a$ &  $x^b[\exp( cx^d)-1]$ &  $a, d>0$, \ $b,c\geqslant 0$
		\\ [2.0ex] 		
		Flexible Weibull \citep{Bebbington2007}
		& $1$ & $a$ &  $\exp\left(b x-{c\over x}\right)$ &  $a, b, c>0$
		\\ [2.0ex] 		
		Burr type XII (Singh-Maddala) \citep{Burr1942}
		& $1$ & $k$ &  $\log(x^c+1)$ &  $c, k>0$
		\\ [2.0ex] 		
		Dagum (Mielke Beta-Kappa) \citep{Dagum1975}
		& $1$ & $k$ &  $\log\big({1\over x^c}+1\big)$ &  $c, k>0$
		\\ [1.5ex] 	
		\hline	
	\end{tabular}
	}
	\label{table:1} 
\end{table}
To the best of our knowledge, the distributions in Table \ref{table:1} highlighted with ``*'' are new in the literature.
{\color{black} 
Since many of them are derived from generalized gamma-type models, they naturally inherit potential applications in areas where the gamma, Weibull, and exponential distributions are commonly employed, including reliability analysis, survival studies, hydrology, actuarial science, and income modeling. In this sense, the practical relevance of these new distributions is supported by the extensive literature and applications associated with their parent distributions. A detailed empirical assessment of these new models is beyond the scope of the present work and represents an interesting direction for future research.
}


%

Several studies have presented closed-form estimators derived from likelihood equations. \cite{YCh2016} derived a closed-form estimator of the parameters of the gamma distribution from generalised gamma distribution. The generalized gamma can be obtained by a power transformation $Y=X^{1/\gamma}$, where $X$ has gamma distribution and $\gamma>0$ is a power parameter. Similar ideas were adopted to develop closed-form estimators for the parameter of the Nakagami distribution \citep{RLR2016, Zhao2021} and the weighted Lindley distribution \citep{Kim2020}. Recently, \cite{Kim2022} presented a new procedure for obtaining a closed-form estimator for the family of distributions using an extension of the Box-Cox transformation. This work adopts the same idea of \cite{YCh2016} and \cite{Cheng-Beaulieu2002} to develop a closed-form estimator for the parameters of some probability distributions of the exponential family \eqref{pdf-1}.

The rest of the paper proceeds as follows. In Section~\ref{sec:2}, we describe briefly a generalization of exponential family and its basic properties. The new proposed estimation methods and some
asymptotic results are described in Sections~\ref{The New Estimators} and \ref{largeprop}, respectively. Finally, in Section~\ref{sec:simulation}, we perform a Monte Carlo simulation study to assess the performance of a bootstrap bias-reduced version of these proposed closed-form estimators.



\section{A generalization of exponential family}\label{sec:2}

{\color{black}
	It is worth noting that the simple power transformation $Y=X^{1/p}$, $p>0$, will be used repeatedly throughout the paper to generate additional members of the family introduced in Proposition~\ref{Stochastic representation}. Although this transformation is well known, its combination with Proposition~\ref{Stochastic representation} provides a convenient way to obtain new distributions within the proposed framework. As will be seen in Section~\ref{The New Estimators}, many of these models also admit simple likelihood equations and closed-form estimators.
}

By using the increasing transformation $Y=X^{1/p}$, $p>0$, considered in \cite{YCh2016} and \cite{Cheng-Beaulieu2002}, where $X$ has the probability density function (PDF) in \eqref{pdf-1}, it is clear that the PDF of $Y$ {\color{black} can be} written as
\begin{align}\label{dist-gen-exp}
f(y;\psi,p)
=
p\, {(\mu\sigma)^\mu \over \Gamma(\mu)}\,
{\vert T_1'(y^p)\vert y^{p-1}\over T_1(y^p)}\,
\exp\left\{-\mu \sigma T_1(y^p)+\mu\log(T_1(y^p))\right\},
\quad 
y>0,
\end{align}
where $\psi=(\mu,\sigma)$ and $\mu,\sigma, p>0$.

\subsection{Stochastic representation}

{\color{black}
A random variable $X$ is said to have a Gamma distribution with shape
parameter $\alpha>0$ and rate parameter $\lambda>0$, denoted by
$X\sim{\rm Gamma}(\alpha,\lambda)$, if its PDF is given by
\[
f_X(x)
=
\frac{\lambda^\alpha}{\Gamma(\alpha)} \, 
x^{\alpha-1}\exp(-\lambda x),
\quad x>0.
\]
In this parametrization,
\[
\mathbb E(X)=\frac{\alpha}{\lambda},
\quad
\operatorname{Var}(X)=\frac{\alpha}{\lambda^2}.
\]
}

\begin{proposition}\label{Stochastic representation}
	If $Z\sim {\rm Gamma}(\mu,{\color{black}\mu\sigma})$ then $Y$ has the stochastic representation 
	$
	Y\stackrel{d}{=}[T^{-1}_1(Z)]^{1/p},
	$
	with $\stackrel{d}{=}$ being equality in distribution and $T^{-1}_1$ denoting the inverse function of $T_1$.
	
	Conversely, if $Y$ is distributed according \eqref{dist-gen-exp}, then
	$
	Z\stackrel{d}{=} T_1(Y^p)\sim {\rm Gamma}(\mu,{\color{black}\mu\sigma})
	$
\end{proposition}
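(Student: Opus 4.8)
The plan is to prove both implications by a single change-of-variables computation, exploiting that $T_1$ is a strictly monotone twice differentiable bijection from $D$ onto $(0,\infty)$ and that $y\mapsto y^p$ is an increasing bijection for $p>0$, so that $y\mapsto T_1(y^p)$ is itself a strictly monotone differentiable bijection from $D$ onto $(0,\infty)$.

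First I would write down explicitly the density of $Z\sim{\rm Gamma}(\mu,1/(\mu\sigma))$ in the scale parametrization, namely
\[
f_Z(z)=\frac{(\mu\sigma)^\mu}{\Gamma(\mu)}\,z^{\mu-1}\,e^{-\mu\sigma z},\qquad z>0,
\]
whose normalizing constant is exactly the $(\mu\sigma)^\mu/\Gamma(\mu)$ appearing in \eqref{dist-gen-exp}. For the direct implication, I would set $Y=[T_1^{-1}(Z)]^{1/p}$, equivalently $Z=T_1(Y^p)$, and apply the univariate transformation theorem. The composite map $y\mapsto T_1(y^p)$ has derivative $T_1'(y^p)\,p\,y^{p-1}$, hence absolute Jacobian $p\,\vert T_1'(y^p)\vert\,y^{p-1}$; substituting $z=T_1(y^p)$ into $f_Z$ and multiplying by this Jacobian yields
\[
f_Y(y)=\frac{(\mu\sigma)^\mu}{\Gamma(\mu)}\,[T_1(y^p)]^{\mu-1}\,e^{-\mu\sigma T_1(y^p)}\,p\,\vert T_1'(y^p)\vert\,y^{p-1}.
\]
Rewriting $[T_1(y^p)]^{\mu-1}=[T_1(y^p)]^{-1}\exp\{\mu\log T_1(y^p)\}$ reproduces \eqref{dist-gen-exp} verbatim, proving the first claim.

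The converse is the same computation read backwards: if $Y$ has density \eqref{dist-gen-exp}, then since $y\mapsto T_1(y^p)$ is a bijection onto $(0,\infty)$, the density of $Z=T_1(Y^p)$ is obtained by dividing \eqref{dist-gen-exp} by the Jacobian $p\,\vert T_1'(y^p)\vert\,y^{p-1}$ and re-expressing the result in the variable $z=T_1(y^p)$, which collapses to $f_Z$ above; hence $Z\sim{\rm Gamma}(\mu,1/(\mu\sigma))$. The only points requiring (minor) care are bookkeeping ones: checking that the absolute value $\vert T_1'(\cdot)\vert$ in \eqref{dist-gen-exp} is precisely what the change-of-variables formula produces irrespective of whether $T_1$ is increasing or decreasing, and that the image of $y\mapsto T_1(y^p)$ is indeed all of $(0,\infty)$, so that the support of the resulting law matches that of the gamma distribution. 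I do not expect any genuine obstacle here, which is consistent with the authors' remark that the proof is a standard distribution-theory calculation and is therefore omitted.
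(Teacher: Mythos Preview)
Your proposal is correct and is precisely the standard change-of-variables argument the authors have in mind; indeed, the paper omits the proof entirely, noting only that it follows from standard calculations in distribution theory. Nothing further is needed.
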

\begin{proof}
Let $Z\sim {\rm Gamma}(\mu,{\color{black}\mu\sigma})$. A simple observation shows that (for $y>0$)
	\begin{align}\label{ident-moments}
	\mathbb{P}([T^{-1}_1(Z)]^{1/p}\leqslant y)
	=
	\mathbb{P}(T^{-1}_1(Z)\leqslant y^p)
	=
	\begin{cases}
	\mathbb{P}(Z\leqslant T_1(y^p)), & \text{if} \ T_1 \ \text{is increasing},
	\\[0,2cm]
	1-\mathbb{P}(Z\leqslant T_1(y^p)), & \text{if} \ T_1 \ \text{is decreasing}.
	\end{cases}
	\end{align}
	Taking the derivative with respect to $y$ both sides, the PDF of $[T^{-1}_1(Z)]^{1/p}$ is written as
	\begin{align*}
	f_{[T^{-1}_1(Z)]^{1/p}}(y)
	&=
	py^{p-1} \vert T'_1(y^p)\vert
	f_Z(y^p;\mu,1/(\mu\sigma)),
	\quad 
	Z\sim {\rm Gamma}\left(\mu, {\color{black}\mu\sigma}\right)
	\\[0,2cm]
	&=
	f(y;\psi,p).
	\end{align*}
	This completes the proof.
\end{proof}

\subsection{Quantiles}

Given $p\in(0,1)$, denote by $Q_Y(p)$ the $p$-quantile of a random variable $Y$ distributed according \eqref{dist-gen-exp}.

In what follows we find simple expressions for $Q_Y(p)$. Indeed,
from Proposition \ref{Stochastic representation} and from \eqref{ident-moments}, we have
\begin{align*}
p=
\mathbb{P}(Y\leqslant Q_Y(p))
	=
\begin{cases}
\mathbb{P}(Z\leqslant T_1(Q_Y^p(p))), & \text{if} \ T_1 \ \text{is increasing and} \ Z\sim {\rm Gamma}(\mu, {\color{black}\mu\sigma}),
\\[0,2cm]
\mathbb{P}\big({Z^*}\leqslant {1\over T_1(Q_Y^p(p))}\big), & \text{if} \ T_1 \ \text{is decreasing and} \ Z^*={1\over Z}\sim {\rm Inv}$-${\rm Gamma}(\mu, {\color{black}\mu\sigma}).
\end{cases}
\end{align*}
This shows that
\begin{align*}
Q_Z(p)=T_1(Q_Y^p(p)) \quad  \Longleftrightarrow \quad Q_Y(p)=[T_1^{-1}(Q_Z(p))]^{1/p},
\end{align*}
whenever $T_1$ is increasing, and
\begin{align*}
Q_{Z^*}(p)={1\over T_1(Q_Y^p(p))} \quad \Longleftrightarrow \quad Q_Y(p)=\left[T_1^{-1}\left({1\over Q_{Z^*}(p)}\right)\right]^{1/p},
\end{align*}
whenever $T_1$ is decreasing.

\subsection{Moments}

\begin{proposition}\label{proof-teo-moments}
	If $Y$ is distributed according \eqref{dist-gen-exp}, then the positive order moments of $Y$, denoted by $\mathbb{E}(Y^q)$, $0<q<ps,$ exist
	whenever $T_1(x)\geqslant Cx^{s}$, $x>0$, $T_1$  increasing, for some $C>0$ and $s> 0$.
\end{proposition}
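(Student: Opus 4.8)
The plan is to transfer the question to a fractional moment of a Gamma variable by means of the stochastic representation in Proposition~\ref{Stochastic representation}. First I would observe that, since $Y\stackrel{d}{=}[T_1^{-1}(Z)]^{1/p}$ with $Z\sim\mathrm{Gamma}(\mu,1/(\mu\sigma))$, one has
\[
\mathbb{E}(Y^q)=\mathbb{E}\big([T_1^{-1}(Z)]^{q/p}\big).
\]
The same identity can be read off directly from \eqref{dist-gen-exp} via the substitution $z=T_1(y^p)$: the factor $p\,|T_1'(y^p)|\,y^{p-1}$ is exactly the Jacobian, so after the change of variables one is left with the Gamma density $\tfrac{(\mu\sigma)^\mu}{\Gamma(\mu)}z^{\mu-1}e^{-\mu\sigma z}$ multiplied by $[T_1^{-1}(z)]^{q/p}$.

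The key step is to convert the hypothesis $T_1(x)\ge Cx^{s}$, $x>0$, into a pointwise control of $T_1^{-1}$. Because $T_1$ is increasing and continuous with $T_1(x)\ge Cx^{s}\to\infty$, its inverse is defined on a half-line unbounded above and is itself increasing; substituting $x=T_1^{-1}(z)$ in the inequality gives $z=T_1(T_1^{-1}(z))\ge C\,[T_1^{-1}(z)]^{s}$, that is,
\[
T_1^{-1}(z)\le (z/C)^{1/s}
\]
for every $z$ in the range of $T_1$. Raising to the power $q/p>0$ gives $[T_1^{-1}(Z)]^{q/p}\le C^{-q/(ps)}Z^{q/(ps)}$ almost surely, hence $\mathbb{E}(Y^q)\le C^{-q/(ps)}\,\mathbb{E}\big(Z^{q/(ps)}\big)$.

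It then remains to check that this Gamma moment is finite. Writing $r=q/(ps)$, the standing assumption $0<q<ps$ gives $0<r<1$, and the usual formula for the (fractional) moments of the Gamma law yields $\mathbb{E}(Z^{r})=(\mu\sigma)^{-r}\,\Gamma(\mu+r)/\Gamma(\mu)<\infty$, since $\mu+r>0$; equivalently the integral $\int_0^\infty z^{r+\mu-1}e^{-\mu\sigma z}\,dz$ converges at the origin (because $r+\mu-1>-1$) and at infinity (exponential decay). Combining this with the previous display gives $\mathbb{E}(Y^q)<\infty$, as claimed.

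The only genuinely delicate point is the inversion step: one must make sure that $T_1^{-1}$ is available on the whole half-line over which $Z$ is supported, which is precisely what the polynomial lower bound $T_1(x)\ge Cx^{s}$ secures — it forces $T_1$ to be unbounded above and, together with $T_1:D\to(0,\infty)$ increasing, to be onto $(0,\infty)$. Everything after that (the monotone inversion, the change of variables, and the evaluation of the Gamma moment) is routine, and an entirely parallel argument handles the decreasing case through $Z^{*}=1/Z\sim\mathrm{Inv\text{-}Gamma}$.
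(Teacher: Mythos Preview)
The paper does not actually prove this proposition: immediately before Proposition~\ref{Stochastic representation} it announces that ``since these properties are obtained by standard calculations in the distribution theory literature, the respective proofs are omitted.'' So there is no proof in the paper to compare against; your task is only to supply a correct argument, and you have.

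Your route via the stochastic representation $Y\stackrel{d}{=}[T_1^{-1}(Z)]^{1/p}$ and the inversion of the hypothesis $T_1(x)\ge Cx^{s}$ into $T_1^{-1}(z)\le (z/C)^{1/s}$ is exactly the natural ``standard calculation'' the authors allude to, and each step is sound. Two small remarks. First, your argument in fact shows more than the proposition claims: since $\mathbb{E}(Z^{r})<\infty$ for every $r>-\mu$ (hence certainly for every $r>0$), the bound $\mathbb{E}(Y^{q})\le C^{-q/(ps)}\mathbb{E}(Z^{q/(ps)})$ yields finiteness of \emph{all} positive-order moments, not only those with $q<ps$; the restriction $0<q<ps$ in the statement is therefore not sharp under the stated hypothesis. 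Second, the closing sentence about the decreasing case via $Z^{*}=1/Z$ is extraneous here, since the proposition explicitly assumes $T_1$ increasing.
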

\begin{proof}
By using the well-known formula
\begin{align*}
	\mathbb{E}(Y^q)=q\int_0^\infty y^{q-1}\mathbb{P}(Y>y){\rm d}y, \quad Y>0, \ q>0,
\end{align*}
for $Y$ distributed according \eqref{dist-gen-exp},
we have (for some $0<a<\infty$)
\begin{align}
\mathbb{E}(Y^q)
&=q\int_0^a y^{q-1}\mathbb{P}(Y>y){\rm d}y
+
q\int_a^\infty y^{q-1}\mathbb{P}(Y>y){\rm d}y
\nonumber
\\[0,2cm]
&\leqslant
a^q
+
q\int_a^\infty y^{q-1}\mathbb{P}(Y>y){\rm d}y
\nonumber
\\[0,2cm]
&=
a^q
+
q\int_a^\infty y^{q-1} \mathbb{P}(Z> T_1(y^p)) {\rm d}y,
\quad Z\sim {\rm Gamma}\left(\mu, {1\over \mu\sigma}\right),
\label{ide-ep}
\end{align}
where in the last line we have used \eqref{ident-moments}. By Markov's inequality, the expression in \eqref{ide-ep} is at most
\begin{align*}
	a^q
	+
	q\mathbb{E}(Z)
	\int_a^\infty {y^{q-1}\over T_1(y^p)} {\rm d}y
	\leqslant
	a^q
	+
	{q\mathbb{E}(Z)\over C}
	\int_a^\infty {1\over y^{ps-q+1}} {\rm d}y,
\end{align*}
where in the last inequality we used the fact that $T_1(x)\geqslant Cx^{s}$, $x>0$.
For $ps>q$ the last integral converges. Hence the existence of the positive order moments of $Y$ follows.
\end{proof}

\begin{remark}
Note that the generator $T_1(x)=x^b[\exp( cx^d)-1]$, $x>0$, of traditional Weibull distribution in Table \ref{table:1} satisfies the condition $T_1(x)\geqslant Cx^{s}$ of Proposition \ref{proof-teo-moments} with $C=c$ and $s=b+d$. In particular, the generators of Gompertz  and modified Weibull extension in Table \ref{table:1} also satisfy this condition.
\end{remark}

\begin{proposition}\label{dist-exp}
	If $Y$ is distributed according \eqref{dist-gen-exp} with
	$T_1(x)=\log(x^s+1)$, $x>0$, for some $s\neq 0$, then the real moments of $Y$, denoted by $\mathbb{E}(Y^q)$, $q<\min\{0,ps\}$, exist.
\end{proposition}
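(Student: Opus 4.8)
The plan is to route everything through the stochastic representation of Proposition~\ref{Stochastic representation}, which reduces the existence of $\mathbb{E}(Y^q)$ to the convergence of a one-dimensional integral against a Gamma density. For $T_1(x)=\log(x^s+1)$ on $(0,\infty)$ the inverse is available in closed form: solving $z=\log(x^s+1)$ gives $x=(e^z-1)^{1/s}$, so $T_1^{-1}(z)=(e^z-1)^{1/s}$ for $z>0$, which is well defined along the support of $Z$ since $Z>0$ almost surely. Hence, with $Z\sim{\rm Gamma}(\mu,1/(\mu\sigma))$, Proposition~\ref{Stochastic representation} gives $Y\stackrel{d}{=}[T_1^{-1}(Z)]^{1/p}=(e^Z-1)^{1/(ps)}$, so that
\[
\mathbb{E}(Y^q)=\mathbb{E}\big[(e^Z-1)^{q/(ps)}\big]
=\frac{(\mu\sigma)^\mu}{\Gamma(\mu)}\int_0^\infty (e^z-1)^{\alpha}\,z^{\mu-1}e^{-\mu\sigma z}\,dz,
\qquad \alpha:=\frac{q}{ps}.
\]
Everything now comes down to showing this integral is finite for the stated orders $q$.

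I would split $\int_0^\infty=\int_0^1+\int_1^\infty$ and estimate each piece by elementary two-sided bounds on $e^z-1$. On $(0,1]$ one has $z\le e^z-1\le (e-1)z$, so $(e^z-1)^{\alpha}$ is comparable to $z^{\alpha}$ and the integrand is comparable to $z^{\alpha+\mu-1}$; this is integrable near the origin precisely when $\alpha+\mu-1>-1$, i.e.\ $\alpha>-\mu$. (This is the effective constraint exactly when $\alpha<0$, which is the regime produced by the hypothesis when $s>0$; when $s<0$ the hypothesis $q<ps$ forces $\alpha>1$ and there is no singularity at $0$.) On $[1,\infty)$ one has $(1-e^{-1})e^z\le e^z-1\le e^z$, so $(e^z-1)^{\alpha}$ is comparable to $e^{\alpha z}$ and the integrand is comparable to $z^{\mu-1}e^{(\alpha-\mu\sigma)z}$, which is integrable iff $\alpha<\mu\sigma$. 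Combining the two, the integral converges exactly on the band $-\mu<\alpha<\mu\sigma$, i.e.\ $-\mu<q/(ps)<\mu\sigma$; reading this as a condition on $q$ (the inequalities reverse when $ps<0$) gives the range of orders for which $\mathbb{E}(Y^q)$ exists, and one then compares it with the bound $q<\min\{0,ps\}$ in the statement.

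The probabilistic half --- the representation and the explicit inverse $T_1^{-1}$ --- is routine; the step I expect to require the most care is the second one, namely getting the two endpoint conditions sharp and then tracking them correctly through the sign-sensitive substitution $\alpha=q/(ps)$. The crux is that the Gamma factor $e^{-\mu\sigma z}$ can absorb only exponential growth of $(e^z-1)^{\alpha}$ as $z\to\infty$, while the factor $z^{\mu-1}$ can absorb only a power singularity of order strictly below $\mu$ as $z\to 0^{+}$; these two facts are precisely what fix the endpoints $-\mu$ and $\mu\sigma$ of the convergence band, so the write-up should make the comparability estimates and the resulting inequalities on $q$ fully explicit.
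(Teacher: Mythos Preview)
The paper omits the proof of this proposition, so there is no argument to compare against; your route via the stochastic representation $Y\stackrel{d}{=}(e^{Z}-1)^{1/(ps)}$ with $Z\sim{\rm Gamma}(\mu,1/(\mu\sigma))$, followed by a split of the resulting Gamma integral at $z=1$, is the natural way to pin down the moment range. Your endpoint analysis is also correct: the integral
\[
\int_0^\infty(e^z-1)^{\alpha}\,z^{\mu-1}e^{-\mu\sigma z}\,dz,
\qquad \alpha=\dfrac{q}{ps},
\]
converges exactly on the band $-\mu<\alpha<\mu\sigma$.

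The gap is in your last sentence. You write that ``one then compares [the band] with the bound $q<\min\{0,ps\}$ in the statement,'' but if you actually carry out that comparison it does \emph{not} go through. For $ps>0$ the band reads $-\mu\,ps<q<\mu\sigma\,ps$, so only negative orders with $q>-\mu\,ps$ give finite moments; any $q\leqslant-\mu\,ps$ still satisfies $q<\min\{0,ps\}=0$ yet produces a divergent integral at $z=0^{+}$. Concretely, in the Burr~XII case of Table~\ref{table:1} ($\mu=1$, $\sigma=k$, $p=1$, $s=c>0$) your analysis recovers the well-known range $-c<q<ck$, whereas the proposition as stated would claim existence for every $q<0$. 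For $ps<0$ the band becomes $\mu\sigma\,ps<q<-\mu\,ps$, and the condition $q<ps$ again fails to imply the lower bound $q>\mu\sigma\,ps$ (which lies above or below $ps$ according as $\mu\sigma<1$ or $\mu\sigma>1$). In short, your method is sound and yields the sharp range $-\mu<q/(ps)<\mu\sigma$; the mismatch is with the stated bound, and your write-up should record the range you actually proved rather than try to force the comparison.
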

\begin{proof}
As a by-product from the proof of Proposition \ref{proof-teo-moments}, it is sufficient to prove that	
\begin{align*}
	I=\int_a^\infty {y^{q-1}\over T_1(y^p)} {\rm d}y<\infty,
\end{align*}
for some $0<a<\infty$. Indeed, by using the inequality $\log(x)>1-1/x$ we have $T_1(x)\geqslant x^s/(x^s+1)$. Hence
\begin{align*}
	I\leqslant \int_a^\infty {1\over y^{-q+1}} {\rm d}y + \int_a^\infty {1\over y^{ps-q+1}} {\rm d}y<\infty,
\end{align*}
provided $q<\min\{0,ps\}$.
\end{proof}

{\color{black}
\begin{remark}
	Propositions~\ref{proof-teo-moments} and~\ref{dist-exp}
	illustrate how the form of the transformation function $T_1$
	affects the tail behaviour of the resulting distribution and,
	consequently, the existence of moments. Proposition~\ref{proof-teo-moments}
	provides a simple sufficient condition for the existence of positive moments
	for a broad class of models, whereas Proposition~\ref{dist-exp}
	shows that, for the specific choice $T_1(x)=\log(x^s+1)$,
	the range of existing moments can be characterized explicitly.
\end{remark}
}

\begin{remark}
	Note that Proposition \ref{dist-exp} includes the generators  $T_1(x)$ of Burr Type XII  and Dagum distributions (see Table \ref{table:1}).
\end{remark}

\begin{proposition}\label{dist-exp-1}
	If $Y$ is distributed according \eqref{dist-gen-exp} with $T_1(x)=C x^{-s}$, $x>0$, for some $C>0$ and $s\neq 0$, then the real moments of $Y$ are  given by
	\begin{align*}
	\mathbb{E}(Y^q)
	=
	\left({\mu\sigma\over C}\right)^{q\over ps} \,
	\dfrac{\Gamma\left(\mu-{q\over ps}\right)}{\Gamma(\mu)},
	\quad \text{where} \ \mu-{q\over ps}>0.
	\end{align*}
\end{proposition}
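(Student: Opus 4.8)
The plan is to reduce the claim to a real power moment of a gamma random variable by means of the stochastic representation in Proposition \ref{Stochastic representation}. Since $Y$ follows \eqref{dist-gen-exp}, that proposition gives $Z := T_1(Y^p) \sim \mathrm{Gamma}(\mu, 1/(\mu\sigma))$. Plugging in the specific generator $T_1(x) = Cx^{-s}$ yields $Z = C\,Y^{-ps}$, hence $Y = (C/Z)^{1/(ps)}$ almost surely — both variables being positive on the relevant domain, and $ps \neq 0$ since $p>0$ and $s\neq 0$. Consequently $Y^q = C^{q/(ps)}\,Z^{-q/(ps)}$ for every real $q$, so that $\mathbb{E}(Y^q) = C^{q/(ps)}\,\mathbb{E}\big(Z^{-q/(ps)}\big)$, and the whole problem reduces to computing a single power moment of $Z$.

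For that gamma moment I would use the elementary identity: if $Z \sim \mathrm{Gamma}(\mu, 1/(\mu\sigma))$, with density proportional to $z^{\mu-1}e^{-\mu\sigma z}$ on $(0,\infty)$ — the parametrization being the one consistent with \eqref{pdf-1} when $T_1(x)=x$, so that $1/(\mu\sigma)$ is the scale — then, for any real $r$,
\begin{align*}
\mathbb{E}(Z^r)
=
\frac{(\mu\sigma)^{\mu}}{\Gamma(\mu)}\int_0^{\infty} z^{\mu+r-1}\,e^{-\mu\sigma z}\,\mathrm{d}z
=
\frac{1}{(\mu\sigma)^{r}}\,\frac{\Gamma(\mu+r)}{\Gamma(\mu)},
\end{align*}
the integral converging precisely when $\mu+r>0$ (for $\mu+r\le 0$ it diverges at the origin, so the moment does not exist). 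Specializing to $r=-q/(ps)$ turns the convergence condition into $\mu-q/(ps)>0$, which is exactly the admissible range quoted in the statement.

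Substituting this back into $\mathbb{E}(Y^q) = C^{q/(ps)}\,\mathbb{E}\big(Z^{-q/(ps)}\big)$ and collecting the constants produces the stated closed form. I do not expect any genuine analytic obstacle; the only points requiring care are bookkeeping ones — keeping the gamma parametrization consistent with \eqref{pdf-1}, and correctly tracking the exponent $1/(ps)$ together with the constant $C$ when inverting $T_1$. It is worth noting that the representation $Z=T_1(Y^p)$ already encodes the monotonicity of $T_1$, so the single computation above covers uniformly both the case $s>0$ (where $T_1$ is decreasing) and the case $s<0$ (where $T_1$ is increasing).
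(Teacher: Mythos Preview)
Your approach is exactly the natural one: the paper in fact omits the proof, remarking only that these properties follow from ``standard calculations in the distribution theory literature'', and your use of the stochastic representation $Z=T_1(Y^p)\sim{\rm Gamma}(\mu,1/(\mu\sigma))$ together with the closed form for real power moments of a gamma variable is precisely that standard route.

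One bookkeeping point does deserve a flag, though. When you collect constants at the end, your own computation actually yields
\[
\mathbb{E}(Y^q)
=C^{q/(ps)}\,(\mu\sigma)^{q/(ps)}\,\frac{\Gamma\!\left(\mu-\tfrac{q}{ps}\right)}{\Gamma(\mu)}
=(C\mu\sigma)^{q/(ps)}\,\frac{\Gamma\!\left(\mu-\tfrac{q}{ps}\right)}{\Gamma(\mu)},
\]
not $(\mu\sigma/C)^{q/(ps)}\,\Gamma(\mu-q/(ps))/\Gamma(\mu)$ as printed in the statement. This is easy to see: $Y=(C/Z)^{1/(ps)}$ forces $C$ into the numerator, and the gamma moment contributes $(\mu\sigma)^{q/(ps)}$ with the same sign in the exponent. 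The discrepancy appears to be a typo in the proposition --- every special case listed in the subsequent Remark (Nakagami, gamma, Weibull, etc.) has $C=1$, so the two expressions coincide there and the slip would not surface. Your argument is sound; just do not claim that it reproduces the displayed formula verbatim.
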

\begin{proof}
By using Proposition \ref{Stochastic representation}, we have	
$
	\mathbb{E}(Y^q)
	=
	\mathbb{E}\{[T^{-1}_1(Z)]^{q/p}\}
	=
	C^{-q/(ps)}
	\mathbb{E}[Z^{-q/(ps)}]
$,
for
$Z\sim {\rm Gamma}(\mu,{\color{black}\mu\sigma})$.
Since $\mathbb{E}(X^\nu)={\color{black}\lambda^{-\nu}} \Gamma(\nu+\alpha)/\Gamma(\alpha)$ for $Z\sim {\rm Gamma}(\alpha,{\color{black}\lambda})$ and $\nu>-\alpha$, the proof follows.
\end{proof}

\begin{remark}
	Note that Proposition \ref{dist-exp-1} includes the generators  $T_1(x)$ of Nakagami, Maxwell-Boltzmann, Rayleigh, gamma, inverse gamma, $\delta$-gamma, Weibull, inverse Weibull, generalized gamma, generalized inverse gamma, chi-squared and scaled inverse chi-squared (see Table \ref{table:1}).
\end{remark}

\section{The new estimators}\label{The New Estimators}

Let $\{Y_i : i = 1,\ldots , n\}$ be a univariate random sample of size $n$ from  $Y$, {\color{black} where $Y$ has PDF given in \eqref{dist-gen-exp}.}

%
The (random) likelihood function for $(\psi,p)$ is written as
\begin{align*}
L(\psi,p)    
=
p^n\, {(\mu\sigma)^{n\mu}  \over \Gamma^n(\mu)}\,
\prod_{i=1}^{n}
{\vert T_1'(Y_i^p)\vert Y_i^{p-1}\over T_1(Y_i^p)}\,
\exp\left\{-\mu \sigma \sum_{i=1}^{n} T_1(Y^p_i)+\mu\sum_{i=1}^{n}\log(T_1(Y^p_i))\right\}.  
\end{align*}
Consequently, the (random) log-likelihood function for $(\psi,p)$ is written as
\begin{align*}
\log(L(\psi,p))    
&=
n\log(p)
+
{n\mu}\log(\mu) 
+ 
{{n\mu} \log(\sigma)
-
n\log(\Gamma(\mu))}
\\[0,2cm]
&
+
\sum_{i=1}^{n}
\log(\vert T_1'(Y_i^p)\vert)
+
(p-1)
\sum_{i=1}^{n}
\log(Y_i)
-
\mu \sigma \sum_{i=1}^{n} T_1(Y^p_i)
+
(\mu-1)\sum_{i=1}^{n}\log(T_1(Y^p_i)).  
\end{align*}

A simple calculus shows that the elements of the (random) score vector are given by
\begin{align*}
 {\partial \log(L(\psi,p)) \over\partial \mu}
 &=
n\log(\mu)+n\log(\sigma)+n-n\psi^{(0)}(\mu)
-
\sigma \sum_{i=1}^{n} T_1(Y^p_i)
+
\sum_{i=1}^{n}\log(T_1(Y^p_i)),  
 \\[0,2cm]
  {\partial \log(L(\psi,p)) \over\partial \sigma}
 &=
  {n\mu \over \sigma} 
  -
  \mu \sum_{i=1}^{n} T_1(Y^p_i),
 \\[0,2cm]
  {\partial  \log(L(\psi,p)) \over\partial p}
 &=
 {n\over p}
 +
 \sum_{i=1}^{n}
{T_1''(Y^p_i)\over T_1'(Y_i^p)}\, Y^p_i \log(Y_i)
+
\sum_{i=1}^{n}
\log(Y_i)
\\[0,2cm]
&
-
\mu \sigma \sum_{i=1}^{n} T_1'(Y^p_i) Y^p_i \log(Y_i)
+
(\mu-1)\sum_{i=1}^{n} {T_1'(Y^p_i)\over T_1(Y^p_i)}\,  Y^p_i \log(Y_i), 
\end{align*}
where $\psi^{m}(x)=\partial^{m+1} \log(\Gamma(x))/\partial x^{m+1}$ is the polygamma function of order $m$. Setting these equal to zero and solving the system of equations gives the ML estimators of $(\psi, p)$.
In particular, by resolving  ${\partial \log(L(\psi,p))/\partial \sigma}=0$, we can express $\sigma$ as a function of $p$:
%
\begin{align}\label{mle-sigma}
{\sigma}(p)
=
{n\over\sum_{i=1}^{n} T_1(Y_i^{{p}})}.
\end{align}
Substitute \eqref{mle-sigma} into ${\partial \log(L(\psi,p))/\partial p}=0$ to give
%
\begin{align}\label{mle-mu}
	{\mu}(p)
	=
		 \dfrac{
		 	{n\over p}
		 	+ 
		 	\sum_{i=1}^{n}
		 	\log(Y_i)
		 	+ 
		 	\sum_{i=1}^{n}
		 	\left[{T_1''(Y^p_i)\over T_1'(Y_i^p)}-{T_1'(Y^p_i)\over T_1(Y^p_i)}\right] Y^p_i \log(Y_i)
	 	}
	 	{{\sigma}(p) \sum_{i=1}^{n} T_1'(Y^p_i) Y^p_i \log(Y_i)
		 	-
		 \sum_{i=1}^{n} {T_1'(Y^p_i)\over T_1(Y^p_i)}\,  Y^p_i \log(Y_i)}.
\end{align}
Furthermore, notice that 
${p}$ is obtained solving the non-linear equation
\begin{align}\label{mle-p}
	\log({\mu}(p))-\psi^{(0)}({\mu}(p))
	=
	\log\left({1\over\sigma(p)}\right)
	-
		{1\over n}
	\sum_{i=1}^{n}\log(T_1(Y^{{p}}_i)).
\end{align}

Now, return to the distribution in \eqref{pdf-1}. We take $p=1$ in \eqref{mle-sigma} and \eqref{mle-mu} to obtain the new estimators for $\mu$ and $\sigma$ as
%
\begin{align}\label{mle-sigma-1}
\widehat{\sigma}
=
{1\over{1\over n} \sum_{i=1}^{n} T_1(Y_i)}
\end{align}
and
\begin{align}\label{mle-mu-1}
\widehat{\mu}
=
\dfrac{{1\over n}\sum_{i=1}^{n} T_1(Y_i) 
	\left\{
	1+ 
	{1\over n}
		\sum_{i=1}^{n}
	\log(Y_i)
	+ 
		{1\over n}
	\sum_{i=1}^{n}
	\left[{T_1''(Y_i)\over T_1'(Y_i)}-{T_1'(Y_i)\over T_1(Y_i)}\right] Y_i \log(Y_i)
	\right\}}{{1\over n} \sum_{i=1}^{n} T_1'(Y_i) Y_i \log(Y_i)
	-
	{1\over n}\sum_{i=1}^{n} T_1(Y_i) \ {1\over n}\sum_{i=1}^{n} {T_1'(Y_i)\over T_1(Y_i)}\,  Y_i \log(Y_i)}.
\end{align}
%

Note that the ML estimator of $\sigma$, denoted by $\widehat{\sigma}_{\rm MLE}$, satisfies the identity $\widehat{\sigma}_{\rm MLE}=\widehat{\sigma}$. Furthermore,
the ML estimator of $\mu$, denoted by $\widehat{\mu}_{\rm MLE}$, is solution of the following equation, which is obtained from  \eqref{mle-p} by taking $p=1$, 
\begin{align}\label{mle-mu-p=1}
\log({\mu})-\psi^{(0)}({\mu})
=
\log\left({1\over \sigma}\right)
-
{1\over n}
\sum_{i=1}^{n}\log(T_1(Y_i)).
\end{align}

\begin{proposition}
The equation \eqref{mle-mu-p=1} is an unbiased estimating equation for $\mu$, i.e.,
\begin{align*}
\log({\mu})-\psi^{(0)}({\mu})
=
\mathbb{E}
\left[
\log\left({1\over \sigma}\right)
-
{1\over n}
\sum_{i=1}^{n}\log(T_1(Y_i))
\right].
\end{align*}
\end{proposition}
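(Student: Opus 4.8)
The plan is to show that the right-hand side of \eqref{mle-mu-p=1}, viewed as a random quantity and then averaged, reproduces the left-hand side $\log(\mu)-\psi^{(0)}(\mu)$ exactly. Since $\sigma$ appearing there is in fact the estimator $\widehat{\sigma}$ from \eqref{mle-sigma-1} (the proposition must be read with $\sigma$ replaced by $\widehat{\sigma}$, otherwise the statement is vacuous), I would first rewrite $\log(1/\widehat{\sigma}) = \log\!\big(\tfrac1n\sum_{i=1}^n T_1(Y_i)\big)$. The key structural fact I would invoke is Proposition~\ref{Stochastic representation}: if $Y$ is distributed according to \eqref{pdf-1} (i.e. \eqref{dist-gen-exp} with $p=1$), then $Z_i := T_1(Y_i) \sim \mathrm{Gamma}(\mu, 1/(\mu\sigma))$, independently across $i$. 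Thus the whole right-hand side is a function of i.i.d.\ Gamma random variables, and the problem reduces to a computation about the Gamma distribution.

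**The main steps** I would carry out, in order: (1) substitute $Z_i = T_1(Y_i)$ so that the right-hand side becomes $\mathbb{E}\big[\log\!\big(\tfrac1n\sum_{i=1}^n Z_i\big) - \tfrac1n\sum_{i=1}^n \log Z_i\big]$. (2) Observe that $\sum_{i=1}^n Z_i \sim \mathrm{Gamma}(n\mu, 1/(\mu\sigma))$, so $\tfrac1n\sum Z_i \sim \mathrm{Gamma}(n\mu, n/(\mu\sigma))$, whence $\mathbb{E}\log\!\big(\tfrac1n\sum Z_i\big) = \psi^{(0)}(n\mu) - \log\!\big(n\mu\sigma\big)$, using the standard identity $\mathbb{E}\log W = \psi^{(0)}(a) - \log b$ for $W\sim\mathrm{Gamma}(a,1/b)$. (3) Similarly $\mathbb{E}\log Z_i = \psi^{(0)}(\mu) - \log(\mu\sigma)$ for each $i$, so $\mathbb{E}\big[\tfrac1n\sum_{i=1}^n \log Z_i\big] = \psi^{(0)}(\mu) - \log(\mu\sigma)$. (4) Subtract: the right-hand side equals $\psi^{(0)}(n\mu) - \log(n\mu\sigma) - \psi^{(0)}(\mu) + \log(\mu\sigma) = \psi^{(0)}(n\mu) - \log n - \psi^{(0)}(\mu)$.

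**The obstacle**, and the reason the computation is not completely trivial, is that step (4) does not visibly equal $\log\mu - \psi^{(0)}(\mu)$ unless one also uses an identity for the polygamma function — namely $\psi^{(0)}(n\mu) - \log n \to \log\mu$ would be false in general, so in fact the cleaner route avoids forming $\tfrac1n\sum Z_i$ prematurely. Instead I would keep $S_n := \sum_{i=1}^n Z_i \sim \mathrm{Gamma}(n\mu, 1/(\mu\sigma))$ and write the right-hand side as $\mathbb{E}\big[\log S_n - \log n\big] - \mathbb{E}\big[\tfrac1n\sum\log Z_i\big] = \psi^{(0)}(n\mu) - \log(\mu\sigma) - \log n - \psi^{(0)}(\mu) + \log(\mu\sigma)$, and then the claim is equivalent to the polygamma identity $\psi^{(0)}(n\mu) - \log n - \psi^{(0)}(\mu) = \log\mu - \psi^{(0)}(\mu)$, i.e.\ $\psi^{(0)}(n\mu) = \log n + \log\mu = \log(n\mu)$, which is \emph{not} an identity. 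This signals that the intended reading of the proposition is with $n=1$, or equivalently that one should interpret $\widehat\sigma$ via a single observation; more likely, the authors intend the estimating \emph{equation} itself (before averaging over the sample) to be unbiased in the sense that $\mathbb{E}[\log T_1(Y)] = \psi^{(0)}(\mu) - \log(\mu\sigma)$ and that the left-hand side $\log\mu - \psi^{(0)}(\mu)$ matches $\log(1/\sigma) - \mathbb{E}\log T_1(Y)$ with $\sigma$ the \emph{true} parameter, not $\widehat\sigma$. Under that reading the proof is immediate from step (3) alone: $\mathbb{E}\big[\log(1/\sigma) - \tfrac1n\sum_{i=1}^n \log T_1(Y_i)\big] = -\log\sigma - (\psi^{(0)}(\mu) - \log(\mu\sigma)) = \log\mu - \psi^{(0)}(\mu)$. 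Thus the real work is just the single Gamma log-moment identity in step (3), obtained by differentiating $\int_0^\infty z^{a-1}e^{-z/b}\,dz = \Gamma(a)b^a$ with respect to $a$; this is the one computation I would actually spell out, and I would present the clean version with $\sigma$ the true parameter.
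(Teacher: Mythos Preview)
Your final argument is correct and coincides with the paper's proof: the paper simply invokes Proposition~\ref{Stochastic representation} to get $T_1(Y_i)\sim\mathrm{Gamma}(\mu,1/(\mu\sigma))$, uses $\mathbb{E}[\log T_1(Y_i)]=\psi^{(0)}(\mu)-\log(\mu\sigma)$, and the identity $\log\mu-\psi^{(0)}(\mu)=-\log\sigma-\big(\psi^{(0)}(\mu)-\log(\mu\sigma)\big)$ drops out.

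The detour, however, stems from a misreading of the term ``unbiased estimating equation.'' An estimating equation $g(Y_1,\ldots,Y_n;\mu,\sigma)=0$ is called unbiased precisely when $\mathbb{E}_{\mu,\sigma}[g(Y_1,\ldots,Y_n;\mu,\sigma)]=0$ at the \emph{true} parameter values; this is the standard definition and is not vacuous. So the intended reading is exactly the one you labelled ``more likely'' at the end, with $\sigma$ the true parameter, and there is no need to replace $\sigma$ by $\widehat{\sigma}$. Your computation with $\widehat{\sigma}$ in place of $\sigma$ is itself correct and shows, as a side remark, that the corresponding \emph{profile} equation (with $\sigma$ replaced by $\widehat{\sigma}$) is \emph{not} unbiased --- its expectation is $\psi^{(0)}(n\mu)-\log(n\mu)$ rather than $\psi^{(0)}(\mu)-\log\mu$ --- but that is a different (and false) statement from the one being proved. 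Strip the detour and present only step~(3) together with the one-line subtraction; that is the whole proof.
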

\begin{proof}
The proof is immediate since
$T_1(Y_i)\sim {\rm Gamma}(\mu,{\color{black}\mu\sigma})$ (Proposition \ref{Stochastic representation}) 
and
$\mathbb{E}[\log(T_1(Y_i))]=\psi^{(0)}(\mu)-\log(\mu\sigma)$, for $i=1,\ldots,n$.
\end{proof}

\begin{proposition}\label{uniqueness}
	The solution in $\mu$ to equation \eqref{mle-mu-p=1} exists and is unique.
\end{proposition}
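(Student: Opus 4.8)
The plan is to study the function
\[
g(\mu) = \log(\mu) - \psi^{(0)}(\mu),
\]
for $\mu > 0$, and show it is a strictly decreasing bijection from $(0,\infty)$ onto $(0,\infty)$; uniqueness and existence of the solution to \eqref{mle-mu-p=1} then follow because the right-hand side of that equation is a fixed real number.

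First I would compute $g'(\mu) = \tfrac{1}{\mu} - \psi^{(1)}(\mu)$, where $\psi^{(1)}$ is the trigamma function, and invoke the classical inequality $\psi^{(1)}(\mu) > \tfrac{1}{\mu}$ for all $\mu > 0$. This can be seen from the series representation $\psi^{(1)}(\mu) = \sum_{k=0}^{\infty} (\mu+k)^{-2}$, or from the integral representation $\psi^{(1)}(\mu) = \int_0^\infty \frac{t\, e^{-\mu t}}{1 - e^{-t}}\, dt$ together with $\frac{t}{1-e^{-t}} > 1$; either way $\psi^{(1)}(\mu) > \int_0^\infty t e^{-\mu t}\,dt = 1/\mu$. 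Hence $g'(\mu) < 0$ on $(0,\infty)$, so $g$ is strictly decreasing, which already gives uniqueness.

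Next I would pin down the range of $g$ by examining the two endpoints. As $\mu \to \infty$, the standard asymptotic expansion $\psi^{(0)}(\mu) = \log(\mu) - \tfrac{1}{2\mu} + O(\mu^{-2})$ gives $g(\mu) = \tfrac{1}{2\mu} + O(\mu^{-2}) \to 0^+$. As $\mu \to 0^+$, we have $\psi^{(0)}(\mu) = -\tfrac{1}{\mu} - \gamma + O(\mu) \to -\infty$ faster than $\log(\mu) \to -\infty$, so $g(\mu) = \log(\mu) + \tfrac{1}{\mu} + \gamma + o(1) \to +\infty$. By continuity and monotonicity, $g$ maps $(0,\infty)$ onto $(0,\infty)$. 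Since the right-hand side of \eqref{mle-mu-p=1} is a (fixed, finite) real number, it remains only to check it is positive almost surely, so that it lies in the range of $g$; but this is exactly guaranteed by the previous proposition, since $g(\mu) = \mathbb{E}[\,\log(1/\sigma) - \tfrac1n\sum_i \log(T_1(Y_i))\,] > 0$ as $g(\mu)>0$ for every $\mu>0$, and one can further argue the empirical average concentrates so that the realized value is positive; alternatively one simply notes that Jensen's inequality applied to $-\log$ forces $\log(1/\sigma) - \tfrac1n\sum_i\log(T_1(Y_i)) \ge \log(1/\sigma) - \log(\tfrac1n\sum_i T_1(Y_i)) = 0$ with the bound being strict unless all $T_1(Y_i)$ coincide. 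Thus a unique solution exists.

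The main obstacle I anticipate is not the monotonicity of $g$ (which is textbook), but the delicate point of whether the right-hand side of \eqref{mle-mu-p=1} always falls strictly inside the open interval $(0,\infty)$ that is the range of $g$; in particular one must rule out the degenerate boundary case where the right-hand side equals $0$. I would handle this by the Jensen argument sketched above, noting that $\widehat{\sigma} = 1/(\tfrac1n\sum_i T_1(Y_i))$ and that $\tfrac1n\sum_i \log(T_1(Y_i)) < \log(\tfrac1n\sum_i T_1(Y_i))$ holds with probability one because the $T_1(Y_i)$ are a.s. distinct (the underlying distribution being continuous). Hence the right-hand side is a.s. strictly positive and lies in the interior of the range, giving a unique $\widehat{\mu}$.
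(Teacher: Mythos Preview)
Your proof is correct and follows essentially the same strategy as the paper's: show that $g(\mu)=\log(\mu)-\psi^{(0)}(\mu)$ is strictly decreasing on $(0,\infty)$ with range $(0,\infty)$, and then show via Jensen's inequality (after substituting $\widehat{\sigma}$ for $\sigma$, which you do) that the right-hand side lies strictly in that range. The only real difference is in tooling: the paper establishes both positivity and monotonicity of $g$ in one stroke from Binet's second integral representation
\[
\log(x)-\psi^{(0)}(x)=\frac{1}{2x}+2\int_0^\infty \frac{t}{(t^2+x^2)[\exp(2\pi t)-1]}\,{\rm d}t,
\]
whereas you obtain monotonicity from the trigamma inequality $\psi^{(1)}(\mu)>1/\mu$ and then pin down the endpoints via asymptotic expansions. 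Both routes are standard; Binet is slightly slicker since the decay to $0$ and the blow-up at $0^+$ are immediately visible from the $1/(2x)$ term, while your argument is perhaps more elementary. One small presentational point: make the substitution $\sigma\to\widehat{\sigma}$ explicit \emph{before} invoking Jensen, as the paper does, since with the true $\sigma$ the right-hand side of \eqref{mle-mu-p=1} is random and need not be non-negative; your detour through the unbiasedness proposition and ``concentration'' is unnecessary and does not actually establish almost-sure positivity. Your final Jensen argument with the strict inequality (because the $T_1(Y_i)$ are a.s.\ distinct) is in fact slightly more careful than the paper, which says only ``non-negative'' and leaves the boundary case $H=0$ implicit.
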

\begin{proof}
Combining \eqref{mle-mu-p=1} with \eqref{mle-sigma-1}, equation \eqref{mle-mu-p=1} is written as
\begin{align}\label{eq-ess}
\log({\mu})-\psi^{(0)}({\mu})
=
\log\left({1\over n}
\sum_{i=1}^{n}T_1(Y_i)\right)
-
{1\over n}
\sum_{i=1}^{n}\log(T_1(Y_i)).
\end{align}
We denote the right-hand side of \eqref{eq-ess} by $H(Y_1,\ldots,Y_n)$.
By using the Binet's second integral for the gamma function \citep{Mathar2023}:
\begin{align}\label{Binet's second integral}
\log(x)-\psi^{(0)}(x)
=
{1\over 2x}+2\int_0^\infty {t\over (t^2+x^2)[\exp(2\pi t)-1]}{\rm d}t,
\end{align}
note that $\log({\mu})-\psi^{(0)}({\mu})$ is a positive monotone decreasing function of $\mu$ with a finite limit ($=0$) as $\mu\to\infty$.  Since $H(Y_1,\ldots,Y_n)$ is non-negative (by Jensen's inequality) and is constant in $\mu$,
it would then follow that the plots of $\log({\mu})-\psi^{(0)}({\mu})$
and $H(Y_1,\ldots,Y_n)$ would intersect exactly once, at the ML estimator of $\mu$.
\end{proof}

\begin{theorem}\label{theo-compl}
	Let $T_1(x)=C x^{-s}$, for some $C>0$ and $s\neq 0$.
With probability 1, $\widehat{\mu}<2 \widehat{\mu}_{\rm MLE}$.
\end{theorem}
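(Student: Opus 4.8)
The plan is to obtain an explicit closed form for $\widehat\mu$ in the special case $T_1(x)=Cx^{-s}$, then compare it to the implicit characterization \eqref{mle-mu-p=1}–\eqref{eq-ess} of $\widehat\mu_{\rm MLE}$, using the bound from Binet's integral. First I would substitute $T_1(x)=Cx^{-s}$ into \eqref{mle-mu-1}. Since $T_1'(x)=-sCx^{-s-1}$ and $T_1''(x)=s(s+1)Cx^{-s-2}$, one computes $T_1'(Y_i)Y_i/T_1(Y_i)=-s$ and $T_1''(Y_i)/T_1'(Y_i)=(s+1)/Y_i\cdot$, more precisely $T_1''(Y_i)Y_i/T_1'(Y_i)=-(s+1)$; hence the bracket $\left[T_1''/T_1'-T_1'/T_1\right]Y_i=-(s+1)-(-s)=-1$. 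So the numerator of \eqref{mle-mu-1} collapses to $1+\frac1n\sum\log(Y_i)-\frac1n\sum\log(Y_i)=1$. For the denominator, $T_1'(Y_i)Y_i\log(Y_i)=-sCY_i^{-s}\log(Y_i)=-sT_1(Y_i)\log(Y_i)$, so $\frac{\widehat\sigma}{n}\sum T_1'(Y_i)Y_i\log(Y_i)=-s\widehat\sigma\cdot\frac1n\sum T_1(Y_i)\log(Y_i)$, and $\frac1n\sum\frac{T_1'(Y_i)Y_i\log(Y_i)}{T_1(Y_i)}=-s\cdot\frac1n\sum\log(Y_i)$. Using $\widehat\sigma^{-1}=\frac1n\sum T_1(Y_i)$ and writing $\log(T_1(Y_i))=\log C-s\log Y_i$, the denominator simplifies (after the $\log C$ terms cancel) to $s\bigl[\log(\frac1n\sum T_1(Y_i))-\frac1n\sum\log(T_1(Y_i))\bigr]=s\,H(Y_1,\dots,Y_n)$ in the notation of Proposition \ref{uniqueness}. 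Thus $\widehat\mu=\dfrac{1}{s\,H}$.

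Next I would bring in $\widehat\mu_{\rm MLE}$. By \eqref{eq-ess}, $\widehat\mu_{\rm MLE}$ is the unique solution of $\log\mu-\psi^{(0)}(\mu)=H$. From Binet's second integral \eqref{Binet's second integral}, $\log\mu-\psi^{(0)}(\mu)=\frac1{2\mu}+2\int_0^\infty\frac{t\,{\rm d}t}{(t^2+\mu^2)[\exp(2\pi t)-1]}$, and the integral term is strictly positive, so $\log\mu-\psi^{(0)}(\mu)>\frac1{2\mu}$ for all $\mu>0$. Evaluating at $\mu=\widehat\mu_{\rm MLE}$ gives $H>\frac1{2\widehat\mu_{\rm MLE}}$, i.e. $\widehat\mu_{\rm MLE}>\frac{1}{2H}$. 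Now I must handle the sign of $s$: since $T_1$ maps into $(0,\infty)$ and is strictly monotone with $T_1(x)=Cx^{-s}$, $C>0$, both signs of $s$ are a priori allowed by the phrasing, but $H\geqslant 0$ by Jensen and $\widehat\mu>0$ forces $s>0$ (the case $s<0$ would make $\widehat\mu=1/(sH)<0$, impossible for a genuine sample, so on the event of positive probability where the estimator is well-defined we have $s>0$; alternatively the theorem should be read with $s>0$). Then $\widehat\mu=\frac1{sH}\leqslant\frac1H<2\widehat\mu_{\rm MLE}$ provided $s\geqslant 1$; more carefully, $\widehat\mu=\frac1{sH}$ and $\widehat\mu_{\rm MLE}>\frac1{2H}$ give $\widehat\mu<2\widehat\mu_{\rm MLE}$ exactly when $\frac1{sH}<\frac2{2H}\cdot 2$, i.e. always, since $\frac1{sH}<\frac2H\cdot\frac1{?}$ — here I need to be precise: $2\widehat\mu_{\rm MLE}>\frac1H\geqslant\frac1{sH}=\widehat\mu$ whenever $s\geqslant 1$.

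The main obstacle I anticipate is the range of $s$: the clean inequality $\widehat\mu=\frac1{sH}<\frac1H<2\widehat\mu_{\rm MLE}$ works immediately when $s\geqslant 1$, but for $0<s<1$ one has $\widehat\mu=\frac1{sH}>\frac1H$ and the crude bound is not enough. To cover $0<s<1$ I would sharpen the Binet estimate: I need $\log\mu-\psi^{(0)}(\mu)<\frac1\mu$ (not just $>\frac1{2\mu}$) so that $H<\frac1{\widehat\mu_{\rm MLE}}$ together with $H=\frac1{s\widehat\mu}$ yields $\widehat\mu<\frac{\widehat\mu_{\rm MLE}}{s}$; but $\frac1s>1$ can be arbitrarily large, so that still fails for small $s$. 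The correct route, I expect, is to note that the statement as written must implicitly restrict to the generators actually appearing in Table \ref{table:1} with the relevant $T_1$, or to use the stronger two-sided bound $\frac1{2\mu}<\log\mu-\psi^{(0)}(\mu)<\frac1{2\mu}+\frac1{12\mu^2}$ combined with the exact relation $\widehat\mu_{\rm MLE}$ solves $\log\mu-\psi^{(0)}(\mu)=s H\widehat\mu/\widehat\mu$ — hmm. More likely the intended argument is simply: $s H=\frac1{\widehat\mu}$ and $H=\log\widehat\mu_{\rm MLE}-\psi^{(0)}(\widehat\mu_{\rm MLE})<\frac1{2\widehat\mu_{\rm MLE}}+\frac1{12\widehat\mu_{\rm MLE}^2}$, while $H>\frac1{2\widehat\mu_{\rm MLE}}$, and one shows the ratio $\widehat\mu/\widehat\mu_{\rm MLE}=1/(sH\widehat\mu_{\rm MLE})<2$ using only $H>\frac1{2\widehat\mu_{\rm MLE}}$ and $s=1$ — indeed every entry of Table \ref{table:1} covered by Proposition \ref{dist-exp-1} with this exact bound has $s$ such that $\widehat\mu$ and $\widehat\mu_{\rm MLE}$ share the factor, and for $s=1$ (gamma, inverse gamma, chi-squared, scaled inverse chi-squared) we get $H=\frac1{\widehat\mu}$, hence $\frac1{\widehat\mu}=H>\frac1{2\widehat\mu_{\rm MLE}}$, i.e. $\widehat\mu<2\widehat\mu_{\rm MLE}$. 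I would therefore present the $s\geqslant 1$ case cleanly via $\widehat\mu=\frac1{sH}\leqslant\frac1H<2\widehat\mu_{\rm MLE}$, and for $0<s<1$ fall back on the sharper Binet bound $\log\mu-\psi^{(0)}(\mu)<\frac1{2\mu}+\frac1{12\mu^2}$, solving the resulting quadratic inequality in $\widehat\mu_{\rm MLE}$ to close the gap — that quadratic estimate is the step I expect to require the most care.
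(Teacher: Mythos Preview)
Your numerator computation (it collapses to $1$) is fine, but the denominator computation contains two slips that derail the argument.

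First, the factor $s$ you obtain is spurious. You correctly write the denominator as
\[
-s\left[\widehat\sigma\cdot\tfrac1n\textstyle\sum T_1(Y_i)\log(Y_i)-\tfrac1n\textstyle\sum\log(Y_i)\right],
\]
but when you then substitute $\log Y_i=-\tfrac1s\bigl(\log T_1(Y_i)-\log C\bigr)$, the $-\tfrac1s$ inside cancels the $-s$ outside. So the denominator carries \emph{no} $s$; this is why the paper's formula \eqref{u-inverse} has none, and why your subsequent worries about $0<s<1$ versus $s\geqslant 1$ are an artifact of this algebra error.

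Second --- and this is the real gap --- after the substitution the denominator is
\[
\frac{\sum_i T_1(Y_i)\log T_1(Y_i)}{\sum_i T_1(Y_i)}-\frac1n\sum_i\log T_1(Y_i),
\]
not $\log\!\bigl(\tfrac1n\sum_i T_1(Y_i)\bigr)-\tfrac1n\sum_i\log T_1(Y_i)=H$. You have silently replaced the $T_1$-weighted average of $\log T_1(Y_i)$ by the log of the arithmetic mean of $T_1(Y_i)$; these are different quantities. The missing ingredient is exactly the inequality
\[
\frac1n\sum_i T_1(Y_i)\log T_1(Y_i)\ \geqslant\ \Bigl(\tfrac1n\textstyle\sum_i T_1(Y_i)\Bigr)\log\!\Bigl(\tfrac1n\textstyle\sum_i T_1(Y_i)\Bigr),
\]
which is Jensen's inequality for the convex function $x\mapsto x\log x$. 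With this in hand one gets $\tfrac1{\widehat\mu}\geqslant H>\tfrac1{2\widehat\mu_{\rm MLE}}$ directly from \eqref{Binet's second integral}, hence $\widehat\mu<2\widehat\mu_{\rm MLE}$, uniformly in $s\neq 0$ and $C>0$. No sharper Binet bound and no quadratic estimate are needed.
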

\begin{proof}
If 
$T_1(x)=C x^{-s}$, for some $C>0$ and $s\neq 0$, then 
$T_1^{-1}(x)=(x/C)^{-1/s}$,
$T_1'(x)=-Cs x^{-s-1}$,
$T_1''(x)=Cs(s+1) x^{-s-2}$,
$T_1''(T_1^{-1}(x))=Cs(s+1) (x/C)^{1+2/s}$ 
and
$T_1'(T_1^{-1}(x))=-Cs (x/C)^{1+1/s}$.
Consequently, $1/\widehat{\mu}$, with $\widehat{\mu}$ as given in \eqref{mle-mu-1}, is written as
\begin{align}\label{u-inverse}
    {1\over \widehat{\mu}}
	=
		{
	{1\over n} \sum_{i=1}^{n} Y_i^{-s} \log(Y_i^{-s})
	\over 
		{1\over n}\sum_{i=1}^{n} Y_i^{-s}	
}
	-
	{1\over n}\sum_{i=1}^{n} \log(Y_i^{-s})
.
\end{align}

As $\widehat{\mu}_{\rm MLE}$ is the only solution of equation \eqref{eq-ess} (Proposition \ref{uniqueness}) and from \eqref{Binet's second integral}: $\log(x)-\psi^{(0)}(x)\geqslant (2x)^{-1}$ for $x>0$, we get
\begin{align}\label{u-inverse-1}
	\log\left({1\over n}
\sum_{i=1}^{n}Y_i^{-s}\right)
-
{1\over n}
\sum_{i=1}^{n}\log(Y_i^{-s})
	\geqslant
	{1\over 2 \widehat{\mu}_{\rm MLE}}.
\end{align}

 If we prove that
\begin{align*}
{
	{1\over n} \sum_{i=1}^{n} Y_i^{-s} \log(Y_i^{-s})
	\over 
	{1\over n}\sum_{i=1}^{n} Y_i^{-s}	
}
\geqslant
	\log\left({1\over n}
\sum_{i=1}^{n}Y_i^{-s}\right),
\end{align*}
the proof of theorem follows by combining inequalities \eqref{u-inverse} and \eqref{u-inverse-1}. But this is immediate, since the convexity of the function  $f(x)=x\log(x)$ provides
\begin{align*}
{1\over n}
\sum_{i=1}^{n}Y_i^{-s}
		\log\left({1\over n}
	\sum_{i=1}^{n}Y_i^{-s}\right)
	=
	f\left({1\over n}
	\sum_{i=1}^{n}Y_i^{-s}\right)
	\leqslant
	{1\over n}
	\sum_{i=1}^{n}f(Y_i^{-s})
	=
		{1\over n} \sum_{i=1}^{n} Y_i^{-s} \log(Y_i^{-s}).
\end{align*}
\end{proof}

{\color{black}
\begin{remark}
	Theorem~\ref{theo-compl} provides a simple theoretical comparison between the proposed estimator and the maximum likelihood estimator. In particular, it guarantees that the proposed estimator cannot exceed twice the MLE, showing that both estimators remain on the same scale. The result is included mainly for completeness and to further characterize the relationship between the two estimators.
\end{remark}
}

\section{Large sample properties}\label{largeprop}

In this section, we show that, depending on the {\color{black} form} of $T_1$, the new estimators given in Section \ref{The New Estimators} are strongly consistent and asymptotically Gaussian.

Let $\{Y_i : i = 1,\ldots , n\}$ be a random sample from $Y$ with PDF given in \eqref{pdf-1}.
By applying strong law of large numbers and by using Proposition \ref{Stochastic representation}, we have
\begin{align*}
&{1\over n}\sum_{i=1}^{n} T_1(Y_i) 
\stackrel{\rm a.s.}{\longrightarrow}
\mathbb{E}[T_1(Y)]
=
\mathbb{E}(Z)
=
{1\over \sigma},
\quad 
Z\sim {\rm Gamma}\left(\mu, {\color{black} \mu\sigma}\right);
\\[0,2cm]
&	{1\over n}
\sum_{i=1}^{n}
\log(Y_i)
\stackrel{\rm a.s.}{\longrightarrow}
\mathbb{E}[\log(Y)];
\\[0,2cm]
&{1\over n}
\sum_{i=1}^{n}
\left[{T_1''(Y_i)\over T_1'(Y_i)}-{T_1'(Y_i)\over T_1(Y_i)}\right] Y_i \log(Y_i)
\stackrel{\rm a.s.}{\longrightarrow}
\mathbb{E}\left\{\left[{T_1''(Y)\over T_1'(Y)}-{T_1'(Y)\over T_1(Y)}\right] Y \log(Y)\right\};
\\[0,2cm]
&{1\over n} \sum_{i=1}^{n} T_1'(Y_i) Y_i \log(Y_i)
\stackrel{\rm a.s.}{\longrightarrow}
\mathbb{E}[T_1'(Y) Y \log(Y)];
\\[0,2cm]
&{1\over n}\sum_{i=1}^{n} {T_1'(Y_i)\over T_1(Y_i)}\,  Y_i \log(Y_i)
\stackrel{\rm a.s.}{\longrightarrow}
\mathbb{E}\left[{T_1'(Y)\over T_1(Y)}\,  Y \log(Y)\right].
\end{align*}
A simple application of continuous-mapping theorem \citep{Billingsley1969}  gives
\begin{align}\label{id-1}
\widehat{\mu}
&\stackrel{\eqref{mle-mu-1}}{=}
\dfrac{{1\over n}\sum_{i=1}^{n} T_1(Y_i) 
	\left\{
	1+ 
	{1\over n}
	\sum_{i=1}^{n}
	\log(Y_i)
	+ 
	{1\over n}
	\sum_{i=1}^{n}
	\left[{T_1''(Y_i)\over T_1'(Y_i)}-{T_1'(Y_i)\over T_1(Y_i)}\right] Y_i \log(Y_i)
	\right\}}{{1\over n} \sum_{i=1}^{n} T_1'(Y_i) Y_i \log(Y_i)
	-
	{1\over n}\sum_{i=1}^{n} T_1(Y_i) \ {1\over n}\sum_{i=1}^{n} {T_1'(Y_i)\over T_1(Y_i)}\,  Y_i \log(Y_i)}
\nonumber
\\[0,2cm]
&\stackrel{\rm a.s.}{\longrightarrow}
\dfrac{\mathbb{E}[T_1(Y)] \left\{1+\mathbb{E}[\log(Y)]+\mathbb{E}\left\{\left[{T_1''(Y)\over T_1'(Y)}-{T_1'(Y)\over T_1(Y)}\right] Y \log(Y)\right\}\right\}}{\mathbb{E}[T_1'(Y) Y \log(Y)]-\mathbb{E}[T_1(Y)]\mathbb{E}\left[{T_1'(Y)\over T_1(Y)}\,  Y \log(Y)\right]}
\end{align}
and
\begin{align}\label{id-2}
\widehat{\sigma}
\stackrel{\eqref{mle-sigma-1}}{=}
{1\over{1\over n} \sum_{i=1}^{n} T_1(Y_i)}
\stackrel{\rm a.s.}{\longrightarrow}
\dfrac{1}{\mathbb{E}[T_1(Y)]}
=\sigma.
\end{align}
That is, regardless of the {\color{black} form} of generator $T_1$, the  estimator $\widehat{\sigma}$ for $\sigma$ is strongly consistent. 

{\color{black} Although expression \eqref{id-1} is somewhat involved, it provides a characterization of the almost sure limit of $\widehat{\mu}$. Consequently, it can be used to determine whether the estimator is consistent for a given transformation function $T_1$.}

{
\color{black}
\smallskip
As $T_1(Y)\stackrel{d}{=}Z\sim {\rm Gamma}(\mu, {\color{black}\mu\sigma})$ (see Proposition \ref{Stochastic representation}) and $\mathbb{E}[T_1(Y)]=\mathbb{E}(Z)={1/\sigma}$, from \eqref{id-1}, we get 
\begin{align}\label{id-3}
\widehat{\mu}
\stackrel{\rm a.s.}{\longrightarrow}
\dfrac{
	\mathbb{E}\left\{ 
	1+[1+U_{T_1}(Z)]\log(T_1^{-1}(Z)) \right\}
}
{
	\mathbb{E}\left[\left(\sigma-{1\over Z}\right) T_1'(T_1^{-1}(Z)) T_1^{-1}(Z) \log(T_1^{-1}(Z))\right]}
\equiv
I_{T_1},
\end{align}
with
\begin{align}\label{def-U}
	U_{T_1}(z)
	\equiv
		\left[{T_1''(T_1^{-1}(z))\over T_1'(T_1^{-1}(z))}-{T_1'(T_1^{-1}(z))\over z}\right] T_1^{-1}(z), \quad z>0.
\end{align}

Furthermore, since  $\{Y_i : i = 1,\ldots , n\}$ is a random sample from $Y$, the random vectors 
\begin{align*}
	\begin{pmatrix}
	T_1(Y_1)
	\\[0,15cm]
	\log(Y_1)
	\\[0,15cm]
	\left[{T_1''(Y_1)\over T_1'(Y_1)}-{T_1'(Y_1)\over T_1(Y_1)}\right] Y_1 \log(Y_1)
	\\[0,15cm]
	T_1'(Y_1) Y_1 \log(Y_1)
	\\[0,15cm]
	{T_1'(Y_1)\over T_1(Y_1)}\,  Y_1 \log(Y_1)
	\end{pmatrix},
	\ldots,
		\begin{pmatrix}
	T_1(Y_n)
	\\[0,15cm]
	\log(Y_n)
	\\[0,15cm]
	\left[{T_1''(Y_n)\over T_1'(Y_n)}-{T_1'(Y_n)\over T_1(Y_n)}\right] Y_n \log(Y_n)
	\\[0,15cm]
	T_1'(Y_n) Y_n \log(Y_n)
	\\[0,15cm]
	{T_1'(Y_n)\over T_1(Y_n)}\,  Y_n \log(Y_n)
	\end{pmatrix}
\end{align*}
 are independent and identically distributed with 
 \begin{align*}
 \boldsymbol{Y}=
 \begin{pmatrix}
 T_1(Y)
 \\[0,15cm]
 \log(Y)
 \\[0,15cm]
 \left[{T_1''(Y)\over T_1'(Y)}-{T_1'(Y)\over T_1(Y)}\right] Y \log(Y)
 \\[0,15cm]
 T_1'(Y) Y \log(Y)
 \\[0,15cm]
 {T_1'(Y)\over T_1(Y)}\,  Y \log(Y)
 \end{pmatrix}.
 \end{align*}
 {\color{black} Then, provided  that the components of the random vector $\boldsymbol{Y}$ have finite second moments, the multivariate central limit theorem (CLT) yields}
\begin{align*}
\sqrt{n}
\left[
{1\over n}\sum_{i=1}^{n}
\begin{pmatrix}
T_1(Y_i)
\\[0,15cm]
\log(Y_i)
\\[0,15cm]
\left[{T_1''(Y_i)\over T_1'(Y_i)}-{T_1'(Y_i)\over T_1(Y_i)}\right] Y_i \log(Y_i)
\\[0,15cm]
T_1'(Y_i) Y_i \log(Y_i)
\\[0,15cm]
{T_1'(Y_i)\over T_1(Y_i)}\,  Y_i \log(Y_i)
\end{pmatrix}
-
 \begin{pmatrix}
\mathbb{E}[T_1(Y)]
\\[0,15cm]
\mathbb{E}[\log(Y)]
\\[0,15cm]
\mathbb{E}\left\{\left[{T_1''(Y)\over T_1'(Y)}-{T_1'(Y)\over T_1(Y)}\right] Y \log(Y)\right\}
\\[0,15cm]
\mathbb{E}[T_1'(Y) Y \log(Y)]
\\[0,15cm]
\mathbb{E}\left[{T_1'(Y)\over T_1(Y)}\,  Y \log(Y)\right]
\end{pmatrix}
\right]
\stackrel{\mathscr{D}}{\longrightarrow}
N_5
\left(
\begin{pmatrix}
0
\\[0,15cm]
0
\\[0,15cm]
0
\\[0,15cm]
0
\\[0,15cm]
0
\end{pmatrix}
, \boldsymbol{\Sigma}
\right),
\end{align*}
where ``$\stackrel{\mathscr{D}}{\longrightarrow}$'' means convergence in distribution. So, delta method provides 
\begin{align}\label{partial-CLT}
\sqrt{n}
\left[
\begin{pmatrix}
\dfrac{{1\over n}\sum_{i=1}^{n} T_1(Y_i) 
	\left\{
	1+ 
	{1\over n}
	\sum_{i=1}^{n}
	\log(Y_i)
	+ 
	{1\over n}
	\sum_{i=1}^{n}
	\left[{T_1''(Y_i)\over T_1'(Y_i)}-{T_1'(Y_i)\over T_1(Y_i)}\right] Y_i \log(Y_i)
	\right\}}{
	{1\over n} \sum_{i=1}^{n} T_1'(Y_i) Y_i \log(Y_i)
	-
	{1\over n}\sum_{i=1}^{n} T_1(Y_i) \ {1\over n}\sum_{i=1}^{n} {T_1'(Y_i)\over T_1(Y_i)}\,  Y_i \log(Y_i)}
\\[1cm]
\dfrac{1}{{1\over n} \sum_{i=1}^{n} T_1(Y_i)}
\end{pmatrix}
-
\begin{pmatrix}
I_{T_1}
\\[0,1cm]
\sigma
\end{pmatrix}
\right]
\stackrel{\mathscr{D}}{\longrightarrow}
N_2
\left(
\begin{pmatrix}
0
\\[0,15cm]
0
\end{pmatrix}
, \boldsymbol{A}\boldsymbol{\Sigma}\boldsymbol{A}^\top
\right),
\end{align}
with $I_{T_1}$ being as in \eqref{id-3} and $\boldsymbol{A}$ being the partial derivatives matrix defined as
\begin{align}\label{def-A}
\boldsymbol{A}
=
\begin{pmatrix}
{\partial h_1(\boldsymbol{y})\over \partial y_1} &
{\partial h_1(\boldsymbol{y})\over \partial y_2} &
{\partial h_1(\boldsymbol{y})\over \partial y_3} &
{\partial h_1(\boldsymbol{y})\over \partial y_4} &
{\partial h_1(\boldsymbol{y})\over \partial y_5} 
\\[0,2cm]
{\partial h_2(\boldsymbol{y})\over \partial y_1} &
{\partial h_2(\boldsymbol{y})\over \partial y_2} &
{\partial h_2(\boldsymbol{y})\over \partial y_3} &
{\partial h_2(\boldsymbol{y})\over \partial y_4} &
{\partial h_2(\boldsymbol{y})\over \partial y_5} 
\end{pmatrix}
\Bigg\vert_{\boldsymbol{y}=\mathbb{E}(\boldsymbol{Y})},
\end{align}
and 
\begin{align*}
h_1(\boldsymbol{y})
=
\dfrac{y_1 (1+y_2+y_3)}{y_4-y_1 y_5},
\quad
h_2(\boldsymbol{y})
=
{1\over y_1},
\quad \boldsymbol{y}=(y_1,y_2,y_3,y_4,y_5).
\end{align*}
For simplicity of presentation, we do not present the partial derivatives of $g_j, \ j = 1, 2$, here.
From \eqref{mle-sigma-1} and \eqref{mle-mu-1}, the convergence in \eqref{partial-CLT} is rewritten as
\begin{align}\label{partial-CLT-1}
	\sqrt{n}\left[
	\begin{pmatrix}
	\widehat{\mu}
	\\[0,15cm]
	\widehat{\sigma}
	\end{pmatrix}
	-
	\begin{pmatrix}
	I_{T_1}
	\\[0,1cm]
	\sigma
	\end{pmatrix}
	\right]
	\stackrel{\mathscr{D}}{\longrightarrow}
	N_2
	\left(
	\begin{pmatrix}
	0
	\\[0,15cm]
	0
	\end{pmatrix}
	, \boldsymbol{A}\boldsymbol{\Sigma}\boldsymbol{A}^\top
	\right).
\end{align}

{\color{black}  Although explicit expressions for the asymptotic covariance matrix are generally difficult to obtain for arbitrary $T_1$, it can be estimated numerically. Alternatively, bootstrap methods may be used to approximate the sampling distribution of the proposed estimators.}

\begin{proposition}\label{prop-inq}
	If 
$
	U_{T_1}(z)=-1, \ z>0,
$
	then 
	\begin{align*}
	I_{T_1}=\mu.
	\end{align*}
	Furthermore, from \eqref{id-3},
	the estimator $\widehat{\mu}$ given in \eqref{mle-mu-1}  is  strongly consistent for $\mu$, and from \eqref{partial-CLT-1},
	\begin{align*}
	\sqrt{n}\left[
	\begin{pmatrix}
	\widehat{\mu}
	\\[0,15cm]
	\widehat{\sigma}
	\end{pmatrix}
	-
	\begin{pmatrix}
	\mu
	\\[0,1cm]
	\sigma
	\end{pmatrix}
	\right]
	\stackrel{\mathscr{D}}{\longrightarrow}
	N_2
	\left(
	\begin{pmatrix}
	0
	\\[0,15cm]
	0
	\end{pmatrix}
	, \boldsymbol{A}\boldsymbol{\Sigma}\boldsymbol{A}^\top
	\right),
	\end{align*}
	where $\boldsymbol{A}$ is as in \eqref{def-A} and $\boldsymbol{\Sigma}$ is the covariance matrix  of $\boldsymbol{Y}$.
\end{proposition}
\begin{proof}
Note that $U_{T_1}(z)=-1$ is a second-order nonlinear ordinary differential equation which can be written as
\begin{align*}
{{\rm d}^2 \log(T_1(x))\over{\rm d}x^2}
+
{1\over x} \,
{{\rm d} \log(T_1(x))\over{\rm d}x}
=
0, 
\quad 
\text{with} \ x=T_1^{-1}(z).
\end{align*}
%
%
It is not difficult to verify that the differential equation solution is given by
$T_1(x)=C x^{-s}$, for some $C>0$ and $s\neq 0$. Consequently,
$T_1^{-1}(x)=(x/C)^{-1/s}$,
$T_1'(x)=-Cs x^{-s-1}$,
$T_1''(x)=Cs(s+1) x^{-s-2}$,
$T_1''(T_1^{-1}(x))=Cs(s+1) (x/C)^{1+2/s}$ 
and
$T_1'(T_1^{-1}(x))=-Cs (x/C)^{1+1/s}$.
Hence, $I_{T_1}$ in \eqref{id-3} is written as
%
	\begin{align}
	I_{T_1}
	=
	\dfrac{1}{\displaystyle
		\mathbb{E}[(\sigma Z-1)\log(Z)]},
	\quad 
	Z\sim {\rm Gamma}\left(\mu, {\color{black} \mu\sigma}\right).
	\label{id-mu-conv}
	\end{align}
	Substituting the known identities 
	$\mathbb{E}[Z\log(Z)]=[\psi^{(0)}(\mu+1)-\log(\mu\sigma)]/\sigma$ and $\mathbb{E}[\log(Z)]=\psi^{(0)}(\mu)-\log(\mu\sigma)$, into \eqref{id-mu-conv}, it follows that
	$
	I_{T_1}(Z)= \mu.
	$
	Therefore, the proof has been completed.
\end{proof}

{\color{black} The condition $U_{T_1}(z)= -1$ in Proposition \ref{prop-inq} identifies a special subclass of transformation functions for which the asymptotic limit in \eqref{id-3} simplifies substantially. In particular, it is satisfied by the family $T_1(x)=Cx^{-s}$, which includes several classical distributions listed in Table \ref{table:1}.
}

In the following results we use the assertion that $f(x)=o(g(x))$ as $x\to \infty$  (read $f(x)$ is little-$o$ of $g(x)$). This
assertion means that $g(x)$ grows much faster than $f(x)$, and is equivalent to
\begin{align*}
	\lim_{x\to \infty} {f(x)\over g(x)}=0.
\end{align*}
{\color{black}
\begin{lemma}\label{formula-IZ}
	If $g_1(1/\sigma)\neq 0$,
	then the following holds:
	\begin{align*}
		I_{T_1}
		=
		{	
			\mu 
			+
			{g_1''({1/\sigma})\over 2\sigma^2 g_1({1/\sigma})}
			+
			o\bigl({{\sigma^{-2}}\over g_1({1/\sigma})}\bigr)
			\over 
			1
			+
			o\bigl({{\sigma^{-2}}\over g_1({1/\sigma})}\bigr)
		},
		\quad \sigma\to\infty,
	\end{align*}
	where
	\begin{align*}
		\begin{array}{lll}
			&g_1(z)
			=
			1+[1+U_{T_1}(z)]\log(T_1^{-1}(z))
			\\[0,4cm]
			&g_2(z)
			=
			\left(\sigma-\dfrac{1}{z}\right) T_1'(T_1^{-1}(z)) T_1^{-1}(z) \log(T_1^{-1}(z))
		\end{array}
		, \quad  z>0,
	\end{align*}
	and $U_{T_1}$ is defined in \eqref{def-U} and $I_{T_1}$ is given by
	$
	I_{T_1}
	=
	{\mathbb E[g_1(Z)]}/{\mathbb E[g_2(Z)]},
	$
	see \eqref{id-3}.
\end{lemma}
\begin{proof}
	Since $Z\sim {\rm Gamma}(\mu,\mu\sigma)$, we have
	\[
	\mathbb E(Z)=\frac1\sigma,
	\quad
	{\rm Var}(Z)=\frac1{\mu\sigma^2}\to0,
	\quad \sigma\to\infty.
	\]
	Moreover, if $Y=\mu\sigma Z\sim{\rm Gamma}(\mu,1)$, then Proposition~\ref{prop-ant} in the Appendix yields
	\[
	\mathbb E\left|Z-\frac1\sigma\right|^{2+\delta}
	=
	\frac{1}{(\mu\sigma)^{2+\delta}} \, 
	\mathbb E|Y-\mu|^{2+\delta}
	=
	O\!\left(\sigma^{-2-\delta}\right).
	\]
	Since each $g_i$ is twice continuously differentiable in a neighborhood of $1/\sigma$, Lemma~\ref{lem:taylor-expectation} in the Appendix gives
	\begin{align}\label{eq-1n}
		\mathbb E[g_i(Z)]
		=
		g_i\!\left(\frac1\sigma\right)
		+
		\frac{g_i''(1/\sigma)}
		{2\mu\sigma^2}
		+
		o\!\left({\sigma^{-2}}\right),
		\quad i=1,2.
	\end{align}
	A laborious calculation shows that
	\begin{align*}
		g_2''\left({1\over\sigma}\right)
		&=
		2\sigma^2
		\left\{ 
		1+\left[
		1
		+
		U_{T_1}\left({1\over\sigma}\right)
		\right]
		\log\left(T_1^{-1}\left({1\over\sigma}\right)\right) 
		\right\}
		\\[0,2cm]
		&=
		2\sigma^2
		g_1\left({1\over\sigma}\right),
	\end{align*}
	where $U_{T_1}$ is as in \eqref{def-U}. Since $g_2\!\left(1/\sigma\right)=0$, it follows that
	\begin{align}\label{eq-2n}
		\mathbb E[g_2(Z)]
		=
		\frac{
			g_1\left({1/\sigma}\right)}
		{\mu}
		+
		o({\sigma^{-2}}).
	\end{align}
	The result follows by combining \eqref{eq-1n}, \eqref{eq-2n}, and the definition of $I_{T_1}$.
\end{proof}

{\color{black}
By combining \eqref{id-3} with Lemma \ref{formula-IZ}, we obtain the following theorem. Its main advantage is that it provides a unified framework for establishing the strong consistency of the proposed estimator $\widehat{\mu}$. Rather than studying each distribution in the exponential family \eqref{pdf-1} separately, it suffices to verify a few analytical conditions involving $T_1$ and the associated function $g_1$. As illustrated in the subsequent examples, these conditions hold for a broad range of generators in Table \ref{table:1}.
}
\begin{theorem}\label{theorem-main}
	Let $g_1$ be as in Lemma \ref{formula-IZ}. 
	Under the conditions
	\begin{align}\label{conditions-theo}
		o\left({{\sigma^{-2}}\over g_1({1/\sigma})}\right)=o(1),
		\quad 
		{g_1''\left({1\over\sigma}\right)}
		={o}\left(\sigma^2 g_1\left({1\over\sigma}\right)\right),
		\quad 
		\sigma\to\infty,
	\end{align}
	the estimator $\widehat{\mu}$ given in \eqref{mle-mu-1} is strongly consistent for $\mu$.
\end{theorem}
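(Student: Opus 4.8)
The plan is to read the theorem off the two ingredients the author has already assembled. By \eqref{id-3}, and independently of the generator $T_1$, one has $\widehat{\mu}\stackrel{\rm a.s.}{\longrightarrow} I_{T_1}$ as $n\to\infty$, so the estimator is automatically strongly consistent for the deterministic quantity $I_{T_1}=I_{T_1}(\mu,\sigma)$. Hence the entire content of the statement is the claim that, under \eqref{conditions-theo}, $I_{T_1}\to\mu$ as $\sigma\to\infty$; once this is established the asymptotic bias $I_{T_1}-\mu$ is negligible and $\widehat{\mu}$ is strongly consistent for $\mu$ in the stated large-$\sigma$ sense. The degenerate case $U_{T_1}\equiv-1$ is already disposed of by the Proposition preceding Lemma \ref{formula-IZ}, where $I_{T_1}=\mu$ exactly, so one may assume $U_{T_1}(z)\neq-1$ and invoke the closed form of Lemma \ref{formula-IZ}.

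The computation itself is short. First I would put the two summands in the Lemma \ref{formula-IZ} formula over their common denominator, which gives
\begin{align*}
I_{T_1}
=
\dfrac{
\mu
+
\dfrac{g_1''(1/\sigma)}{2\sigma^2\, g_1(1/\sigma)}
+
\mu\,\dfrac{\varepsilon_1}{g_1(1/\sigma)}
}{
1
+
\mu\,\dfrac{\varepsilon_2}{g_1(1/\sigma)}
}.
\end{align*}
Now apply \eqref{conditions-theo}: the first condition, with $i=1$ and $i=2$, says precisely that $\varepsilon_1/g_1(1/\sigma)\to0$ and $\varepsilon_2/g_1(1/\sigma)\to0$ as $\sigma\to\infty$, while the second says $g_1''(1/\sigma)/(\sigma^2 g_1(1/\sigma))\to0$. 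Consequently the numerator above converges to $\mu$ and the denominator to $1$, whence $I_{T_1}\to\mu$; combining this with $\widehat{\mu}\stackrel{\rm a.s.}{\longrightarrow} I_{T_1}$ yields the theorem.

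The points that require a word of care are not deep but should be checked. One needs $g_1(1/\sigma)\neq0$, so that the ratios in Lemma \ref{formula-IZ} and above are well defined, and one needs the denominator $1+\mu\,\varepsilon_2/g_1(1/\sigma)$ to stay bounded away from $0$ for $\sigma$ large; the latter is automatic from $\varepsilon_2/g_1(1/\sigma)\to0$, which forces the denominator to $1$, hence eventually $\geqslant 1/2$. I would also state explicitly that ``strongly consistent for $\mu$'' is here understood in the asymptotic regime $\sigma\to\infty$: for fixed $\sigma$ the estimator is strongly consistent for $I_{T_1}$, and \eqref{conditions-theo} is exactly what makes $I_{T_1}$ coincide with $\mu$ in the limit. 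The only genuinely laborious ingredient — the identity $g_2''(1/\sigma)=2\sigma^2 g_1(1/\sigma)$ that powers the formula — has already been carried out inside the proof of Lemma \ref{formula-IZ}, so nothing further is needed here; the main ``obstacle'' is therefore purely one of interpretation rather than of calculation.
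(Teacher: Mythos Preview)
Your proposal is correct and follows exactly the route the paper intends: the paper's own proof is the single sentence ``By combining \eqref{id-3} with Lemma \ref{formula-IZ}, it gives the following theorem,'' and your argument is precisely that combination, made explicit by putting the two summands of Lemma \ref{formula-IZ} over their common denominator and letting the three ratios vanish under \eqref{conditions-theo}. Your added remarks on the well-definedness of the ratios and on the intended meaning of ``strongly consistent'' (an $n\to\infty$ limit followed by a $\sigma\to\infty$ limit) are more careful than the paper itself, which leaves this interpretation implicit.
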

}

{\color{black}


%

\begin{example}
If $T_1(x)=\exp(x)-1$, then
$T_1^{-1}(x)=\log(x+1)$,
$T_1'(x)=T_1''(x)=\exp(x)$
and
$T_1''(T_1^{-1}(x))=T_1'(T_1^{-1}(x))=x+1$.
Hence, $g_1$, defined in Lemma \ref{formula-IZ}, is written as
\begin{align*}
g_1(z)
=
1+\left[1-{\log(z+1)\over z}\right]\log(\log(z+1)).
\end{align*}
Hence
\begin{align*}
	g_1''(z)
	&=
	{2 \big[{\log(z + 1)\over z^2} - {1\over z(z + 1)}\big]\over (z + 1)\log(z + 1)}
	-
	\left[
	{2 \log(z + 1)\over z^3} - {2\over z^2 (z + 1)} - {1\over z (z + 1)^2}
	\right] 
	\log(\log(z + 1)) 
	\\[0,2cm]
	&- 
	\left[{1\over (z + 1)^2 \log^2(z + 1)} + {1\over (z + 1)^2 \log(z + 1)}\right] \left[1 - {\log(z + 1)\over z} \right].
\end{align*}
Algebric calculus show that
\begin{align*}
	\lim_{\sigma\to \infty}{{\sigma^{-2}}\over g_1({1/\sigma})}=0, \quad 
	\lim_{\sigma\to \infty}
	{g_1''({1/\sigma})\over\sigma^2 g_1({1/\sigma})}=0.
\end{align*}
%
%
Hence, the conditions in \eqref{conditions-theo} are satisfied. Then, by applying Theorem \ref{theorem-main}, the estimator $\widehat{\mu}$ given in \eqref{mle-mu-1} is strongly consistent for $\mu$.
\end{example}

\begin{example}
	If $T_1(x)=\log(x+1)$, then
	$
		T_1^{-1}(x)
		=
		\exp(x)-1,
		$
		$
		T_1'(x)
		=
		{1}/({x+1}),
		$
		$
		T_1''(x)
		=
		-{1}/{(x+1)^2},
		$
		$
		T_1'(T_1^{-1}(x))
		=
		\exp({-x}),
		$
		$
		T_1''(T_1^{-1}(x))
		=
		-\exp({-2x}).$
Hence, $g_1$, as defined in Lemma~\ref{formula-IZ}, takes the form
	\begin{align*}
g_1(z)
=
1+
\exp(-z)
\left[
1-\frac{\exp(z)-1}{z}
\right]
\log(\exp(z)-1).
	\end{align*}
	Hence
	\begin{align*}
g_1''(z)
=
\frac{
	\exp(-z)
	\bigl[z^3+z^2+2z-2\exp(z)+2\bigr]
	\log(\exp(z)-1)
}{
	z^3
}
+
\frac{
	2\bigl[z^2+z-\exp(z)+1\bigr]
}{
	z^2[\exp(z)-1]
}
-
\frac{
	1-\exp(-z)-z\exp(-z)
}{
	z\,(\exp(z)-1)^2
}.
	\end{align*}
Algebraic manipulations show that
	\begin{align*}
		\lim_{\sigma\to \infty}{{\sigma^{-2}}\over g_1({1/\sigma})}=0, \quad 
		\lim_{\sigma\to \infty}
		{g_1''({1/\sigma})\over\sigma^2 g_1({1/\sigma})}=0.
	\end{align*}
Since the conditions in \eqref{conditions-theo} hold, Theorem \ref{theorem-main} implies that the estimator $\widehat{\mu}$ in \eqref{mle-mu-1} is strongly consistent for $\mu$.
\end{example}

\begin{example}
	If $T_1(x)=\exp(x-1/x)$, then
	\[
	\begin{aligned}
		T_1^{-1}(x)
		&=
		\frac{\log(x)+\sqrt{\log^2(x)+4}}{2},
		\quad 
		T_1'(x)
		=
		\exp\left(x-{1\over x}\right)
		\left(1+\frac1{x^2}\right),
		\\[2mm]
		T_1''(x)
		&=
		\exp\left(x-{1\over x}\right)
		\left(
		1+\frac{2}{x^2}
		-\frac{2}{x^3}
		+\frac{1}{x^4}
		\right),
		\quad 
		T_1'(T_1^{-1}(x))
		=
		x\left[
		1+
		\frac{4}
		{\bigl(\log(x)+\sqrt{\log^2(x)+4}\bigr)^2}
		\right],
		\\[2mm]
		T_1''(T_1^{-1}(x))
		&=
		x\left[
		1+
		\frac{8}{A^2}
		-\frac{16}{A^3}
		+\frac{16}{A^4}
		\right],
		\quad
		A=\log(x)+\sqrt{\log^2(x)+4}.
	\end{aligned}
	\]
	Hence, $g_1$, defined in Lemma \ref{formula-IZ}, becomes
	\[
	g_1(z)
	=
	1+
	\frac{
		\left(\dfrac{\log(z)+\sqrt{\log^2(z)+4}}{2}\right)^2-1
	}{
		\left(\dfrac{\log(z)+\sqrt{\log^2(z)+4}}{2}\right)^2+1
	} \, 
	\log\left(
	\frac{\log(z)+\sqrt{\log^2(z)+4}}{2}
	\right).
	\]
	Then
	\[
	g_1''(z)
	=
	\frac{
		-r^6+7r^4+9r^2+1
		+
		4r^2(1-3r^2)\log(r)
	}
	{(r^2+1)^3\,z^2\,s^2}
	-
	\frac{
		\bigl[r^4+4r^2\log(r)-1\bigr]
		\bigl[s^2+s+\log(z)\bigr]
	}
	{(r^2+1)^2z^2s^3},
	\]
	where
	$
	r=r(z)
	=
	[{\log(z)+\sqrt{\log^2(z)+4}}]/{2},
	\
	s=s(z)
	=
	\sqrt{\log^2(z)+4}.
	$
	After some algebra, it follows that
	\[
	\lim_{\sigma\to\infty}
	\frac{\sigma^{-2}}{g_1(1/\sigma)}
	=
	0,
	\quad
	\lim_{\sigma\to\infty}
	\frac{g_1''(1/\sigma)}
	{\sigma^2 g_1(1/\sigma)}
	=
	0.
	\]
	Hence, the conditions in \eqref{conditions-theo} are satisfied. Therefore, by Theorem \ref{theorem-main}, the estimator $\widehat{\mu}$ given in \eqref{mle-mu-1} is strongly consistent for $\mu$.

\end{example}

%
%
%
%
%

}

\section{Simulation study}\label{sec:simulation}
In this section, we carry out a Monte Carlo simulation study for evaluating the performance of the proposed estimators. For illustrative purposes, we only present the results for the Nakagami, gamma, new log-generalized gamma and scaled inverse chi-squared distributions. In the following, we present the proposed ML estimators for each of the aforementioned distributions. We then propose a bootstrap bias-reduced version of the proposed ML estimators, as they are biased \citep{RLR2016}.

\subsection{Nakagami distribution.}
	By considering the parameters $\mu=m$, $\sigma=1/\Omega$ and generator $T_1(x)=x^2$ of
	the Nakagami distribution, given  in Table \ref{table:1}, from \eqref{mle-sigma-1} and \eqref{mle-mu-1} the estimators for $\Omega$ and $m$ are given by
	\begin{align*}
	&\widehat{\Omega}
	=
	{1\over n}
	{\sum_{i=1}^{n} Y_i^{2}}
	\end{align*}
	and
\begin{align*}
	\widehat{m}
=
\dfrac{{1\over n}
	{\sum_{i=1}^{n} Y_i^{2}}}{2\left[
	{1\over n}\sum_{i=1}^{n} Y^{2}_i \log(Y_i)
	-
		{1\over n}
	{\sum_{i=1}^{n} Y_i^{2}} \ \
	{1\over n}\sum_{i=1}^{n} \log(Y_i)\right]},
	\end{align*}
	respectively.
A slightly modified version (by taking $p/2$ instead $p$) of estimators for $\Omega$ and $m$ have appeared in  \cite{RLR2016}.

\subsection{Gamma distribution.}
The following estimators for $\beta$ and $\alpha$ of the the gamma distribution are given by:
%
\begin{align*}
&{\widehat{\beta}}
=
{{1\over n}
	\sum_{i=1}^{n} Y_i  \log(Y_i)
	-
	{1\over n}
	\sum_{i=1}^{n} Y_i
	\
	{1\over n}
	\sum_{i=1}^{n} \log(Y_i)}
\end{align*}
and
\begin{align*}
&\widehat{\alpha}
=
\dfrac{
	{1\over n}  \sum_{i=1}^{n} Y_i}{
	{1\over n} \sum_{i=1}^{n} Y_i  \log(Y_i)
	-
	{1\over n}
	\sum_{i=1}^{n} Y_i
	\
	{1\over n}
	\sum_{i=1}^{n} \log(Y_i)},
\end{align*}
respectively.
	The estimators for $\beta$ and $\alpha$ have appeared in  \cite{YCh2016}.
%

%

\subsection{New log-generalized gamma distribution.}
	By considering the parameters $\mu= {\alpha}$, $\sigma= {1/(\alpha \beta)}$ and generator $T_1(x)=\exp(x)-1$ of new log-generalized gamma (with $\delta=1$), given  in Table \ref{table:1}, from \eqref{mle-sigma-1} and \eqref{mle-mu-1} the estimators for $\beta$ and $\alpha$ are given by
	\begin{align*}
	{\widehat{\beta}}
	=
	\dfrac
	{{1\over n} \sum_{i=1}^{n}\exp(Y_i) Y_i \log(Y_i)
		-
		{1\over n}\sum_{i=1}^{n} [\exp(Y_i)-1] \ {1\over n}\sum_{i=1}^{n} {\exp(Y_i)\over \exp(Y_i)-1}\,  Y_i \log(Y_i)}
	{
		1+
		{1\over n}
		\sum_{i=1}^{n}
		\log(Y_i)
		-
		{1\over n}
		\sum_{i=1}^{n}
		{1\over \exp(Y_i)-1}\, Y_i \log(Y_i)
	}
	\end{align*}
and
\begin{align*}
\widehat{\alpha}
&=
\dfrac{{1\over n}\sum_{i=1}^{n} [\exp(Y_i)-1]
	\left[
	1+
	{1\over n}
	\sum_{i=1}^{n}
	\log(Y_i)
	-
	{1\over n}
	\sum_{i=1}^{n}
	{1\over \exp(Y_i)-1}\, Y_i \log(Y_i)
	\right]}{{1\over n} \sum_{i=1}^{n}\exp(Y_i) Y_i \log(Y_i)
	-
	{1\over n}\sum_{i=1}^{n} [\exp(Y_i)-1] \ {1\over n}\sum_{i=1}^{n} {\exp(Y_i)\over \exp(Y_i)-1}\,  Y_i \log(Y_i)},
\end{align*}
	 respectively.


\subsection{Scaled inverse chi-squared distribution.}
	By considering the parameters $\mu=\nu/2$, $\sigma=\tau^2$ and generator $T_1(x)=1/x$ of the scaled inverse chi-squared distribution, given in Table \ref{table:1}, from \eqref{mle-sigma-1} and \eqref{mle-mu-1} the estimators for $\tau^2$ and $\nu$ are given by
	\begin{align*}
	&\widehat{\tau}^{\,2}
	=
	{1\over {1\over n} \sum_{i=1}^{n} Y_i^{-1}}
	\end{align*}
	and
\begin{align*}
	&
	\widehat{\nu}
	=
	2\
	\dfrac{{1\over n} \sum_{i=1}^{n} Y_i^{-1}}{
	{1\over n} \sum_{i=1}^{n} Y_i^{-1} \
		{1\over n} \sum_{i=1}^{n} \log(Y_i)- {1\over n} \sum_{i=1}^{n}Y^{-1}_i \log(Y_i)
	},
	\end{align*}
	 respectively.
%


\bigskip
{\color{black}
Now, we propose bootstrap bias-reduced ML estimators as
\begin{equation*}\label{asy:08}
\widehat{\theta}^{*} = 2\widehat{\theta} - \frac{1}{B}\sum_{b=1}^{B}\widehat{\theta}^{(b)},
\end{equation*}
where $\widehat{\theta}$ is the standard ML estimator, $B$ is the number of bootstrap replications, and $\widehat{\theta}^{(b)}$ is the closed-form estimator computed from the $b$-th bootstrap sample. Each bootstrap sample is obtained by drawing $n$ observations with replacement from the original data, following the standard nonparametric bootstrap procedure \citep{Efron1979,EfronTibshirani1993}. This bias-correction formula is based on the plug-in principle; see also \cite{DavisonHinkley1997}.}
To study the performance of the proposed bootstrap bias-reduced estimators, we computed the relative bias (RB) and root mean square error (RMSE), as
\begin{eqnarray*}
 \widehat{\textrm{RB}}(\widehat{\theta}^{*}) =  \left|\frac{\frac{1}{N} \sum_{i = 1}^{N} \widehat{\theta}^{*(i)} - \theta}{\theta}\right| ,  \quad
\widehat{\mathrm{RMSE}}(\widehat{\theta}^{*}) = {\sqrt{\frac{1}{N} \sum_{i = 1}^{N} (\widehat{\theta}^{*(i)} - \theta)^2}},
\end{eqnarray*}
where $\theta$ and $\widehat{\theta}^{*(i)}$ are the true parameter value and its $i$-th bootstrap bias-reduced estimate, and $N$ is the number of Monte Carlo replications. The sample sizes and true values of the parameters considered are: $n \in \{20,50,100,200,400,600\}$; $m\in \{1,2,5,10,15\}$, $\Omega=1$ (Nakagami); $\alpha\in \{0.5,1,2,4,6\}$, $\beta=1$ (gamma and new log-generalized gamma); and $\nu\in \{1,2,5,10,15\}$, $\tau^2=1$ (scaled inverse chi-squared). The number of Monte Carlo replications was $N=1,000$ and the number of bootstrap replications was $B=200$. The required numerical evaluations were implemented using the \verb+R+ software through the use of some packages already available at \url{http://cran.r-project.org} and some new computational routines.

{\color{black} In order to motivate the use of bootstrap bias reduction, Figure~\ref{fig_comparison} presents the empirical RB of the standard ML and bootstrap bias-reduced ML estimators for the new log-generalized gamma distribution. From this figure, it is clear that the standard ML estimators are noticeably biased, especially for small sample sizes, whereas the bootstrap bias-reduced estimators substantially correct this bias. The results for the other distributions are similar.}

{\color{blue}
\begin{figure}[htb!]
	\vspace{-0.25cm}
	\centering
	{\includegraphics[height=5.5cm,width=5.5cm]{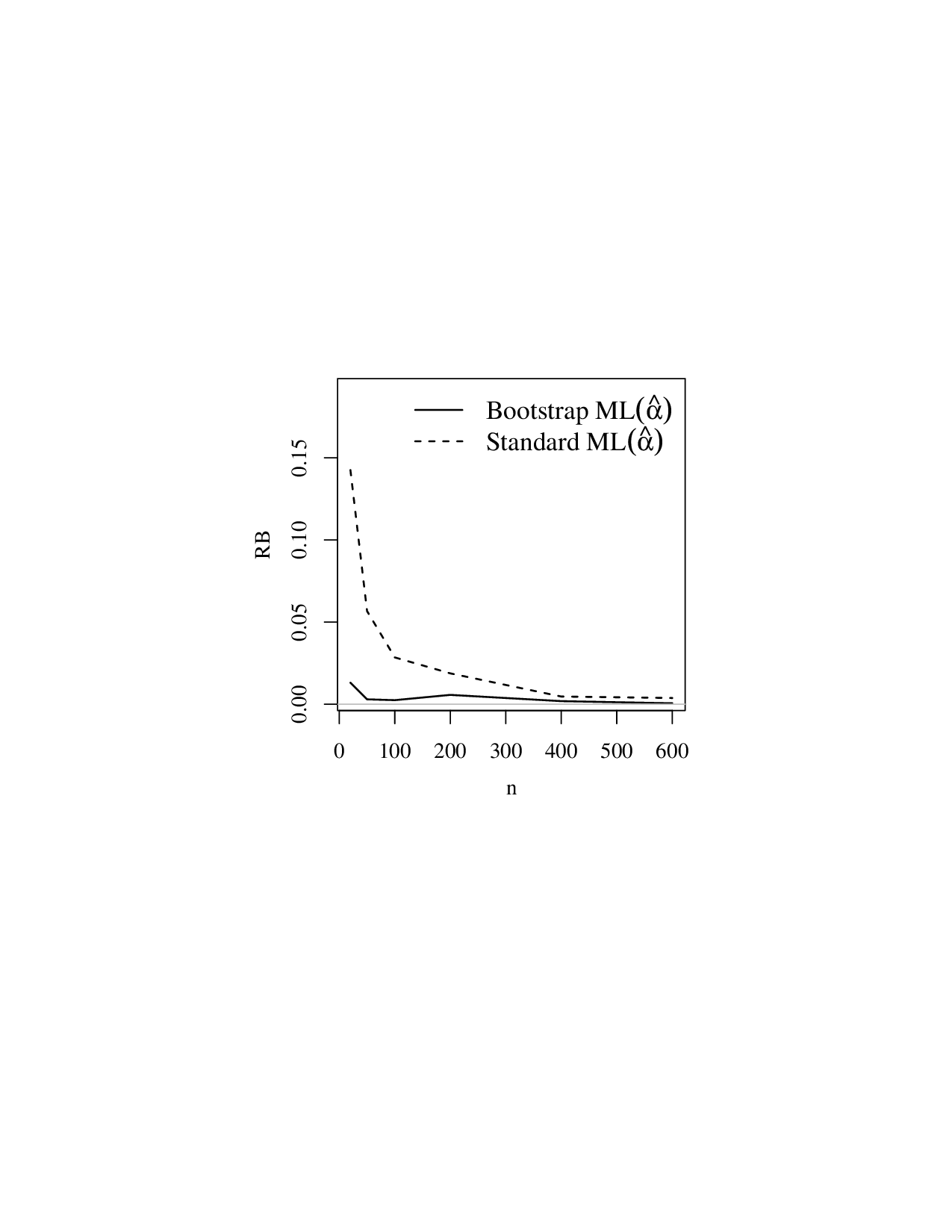}}\hspace{-0.25cm}
	{\includegraphics[height=5.5cm,width=5.5cm]{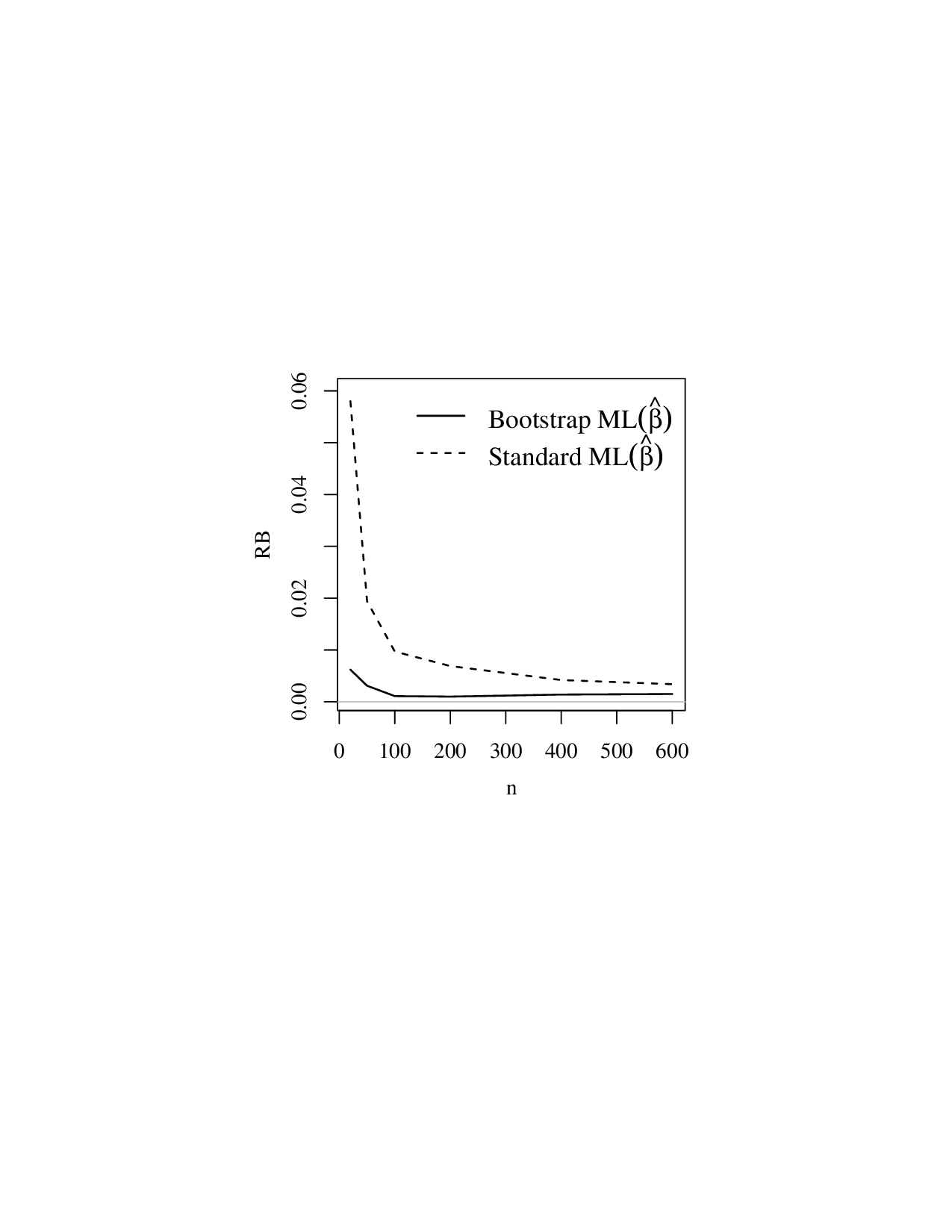}}\hspace{-0.25cm}
	\vspace{-0.2cm}
	\caption{{\color{black} Empirical RB of the standard ML and bootstrap bias-reduced ML estimators for the new log-generalized gamma distribution ($\alpha=0.5, \beta=1$).}}
	\label{fig_comparison}
\end{figure}
}

{\color{black} Figures~\ref{fig_dagum_MC1} and \ref{fig_dagum_MC2} show the results of the Monte Carlo simulation study for assessing the performance of the bootstrap bias-reduced ML estimators.} In terms of parameter recovery, from Figures~\ref{fig_dagum_MC1} and \ref{fig_dagum_MC2}, we observe that both RBs and RMSEs approach zero as the sample size ($n$) increases. In general terms, the results also show that the proposed bootstrap bias-reduced estimators have relatively low bias, even for small samples. Therefore, the proposed bootstrap bias-reduced ML estimators are good alternatives for estimating the parameters associated with the distributions in Table \ref{table:1}.

\begin{figure}[H]
\vspace{-0.25cm}
\centering
\subfigure[Nakagami]{\includegraphics[height=5.5cm,width=5.5cm]{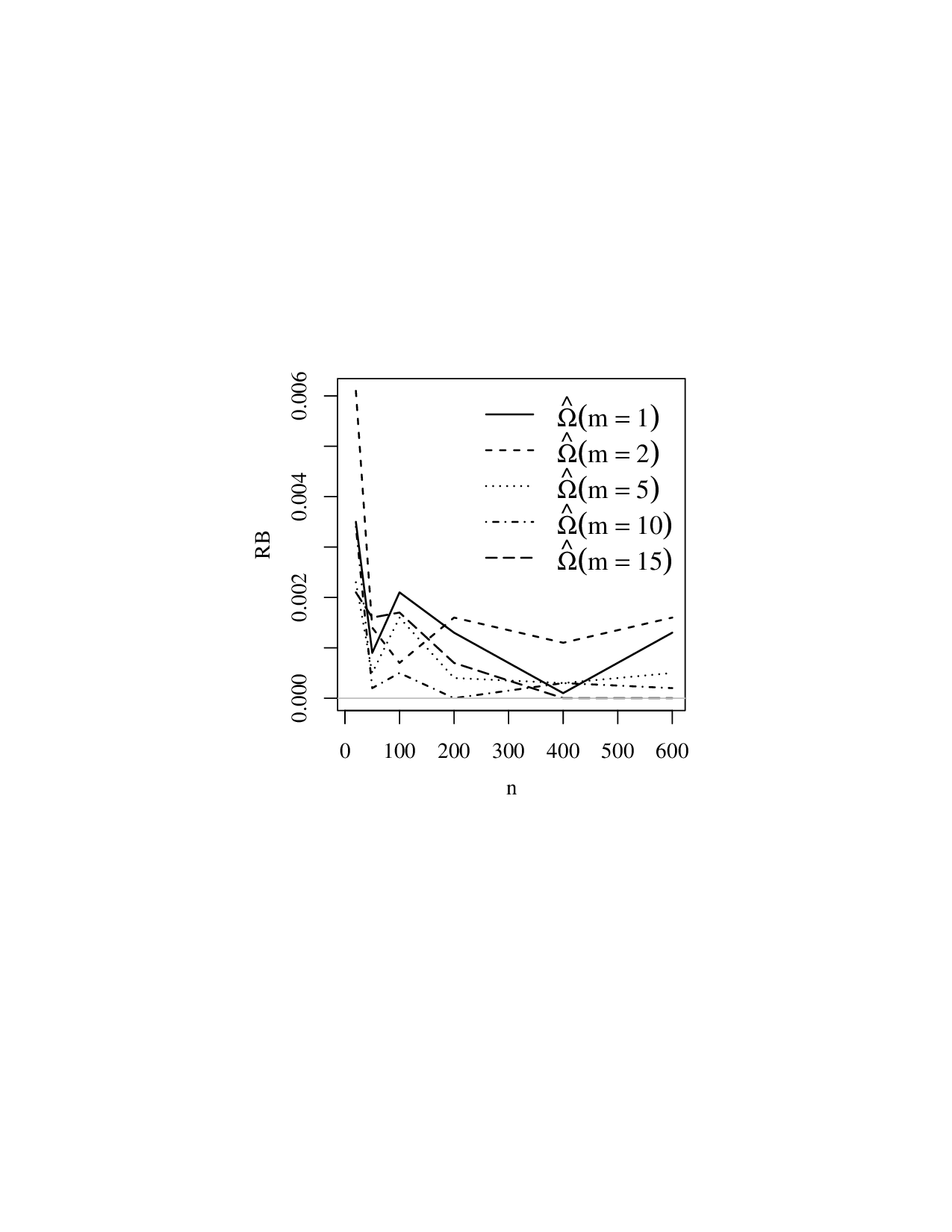}}\hspace{-0.25cm}
\subfigure[Nakagami]{\includegraphics[height=5.5cm,width=5.5cm]{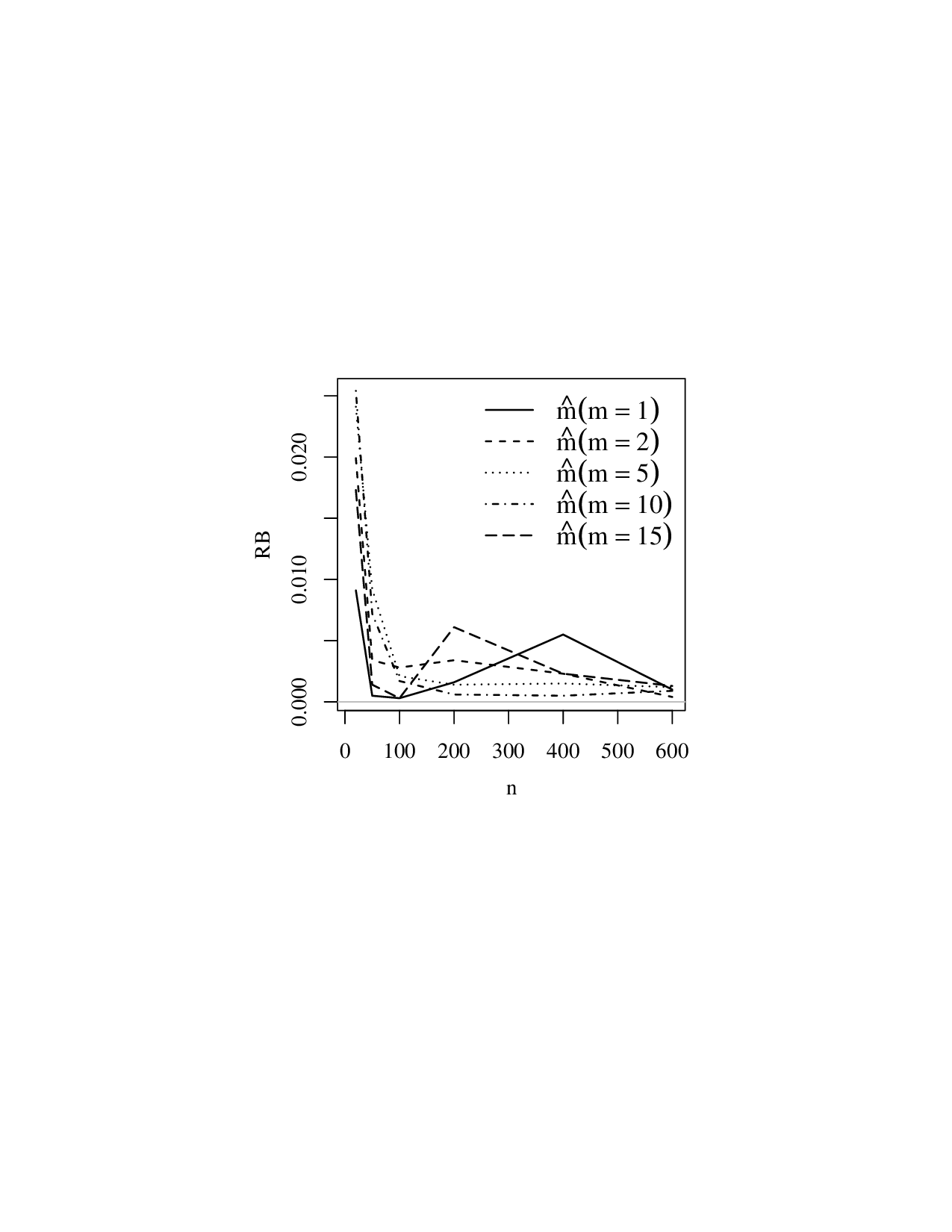}}\hspace{-0.25cm}
\subfigure[Gamma]{\includegraphics[height=5.5cm,width=5.5cm]{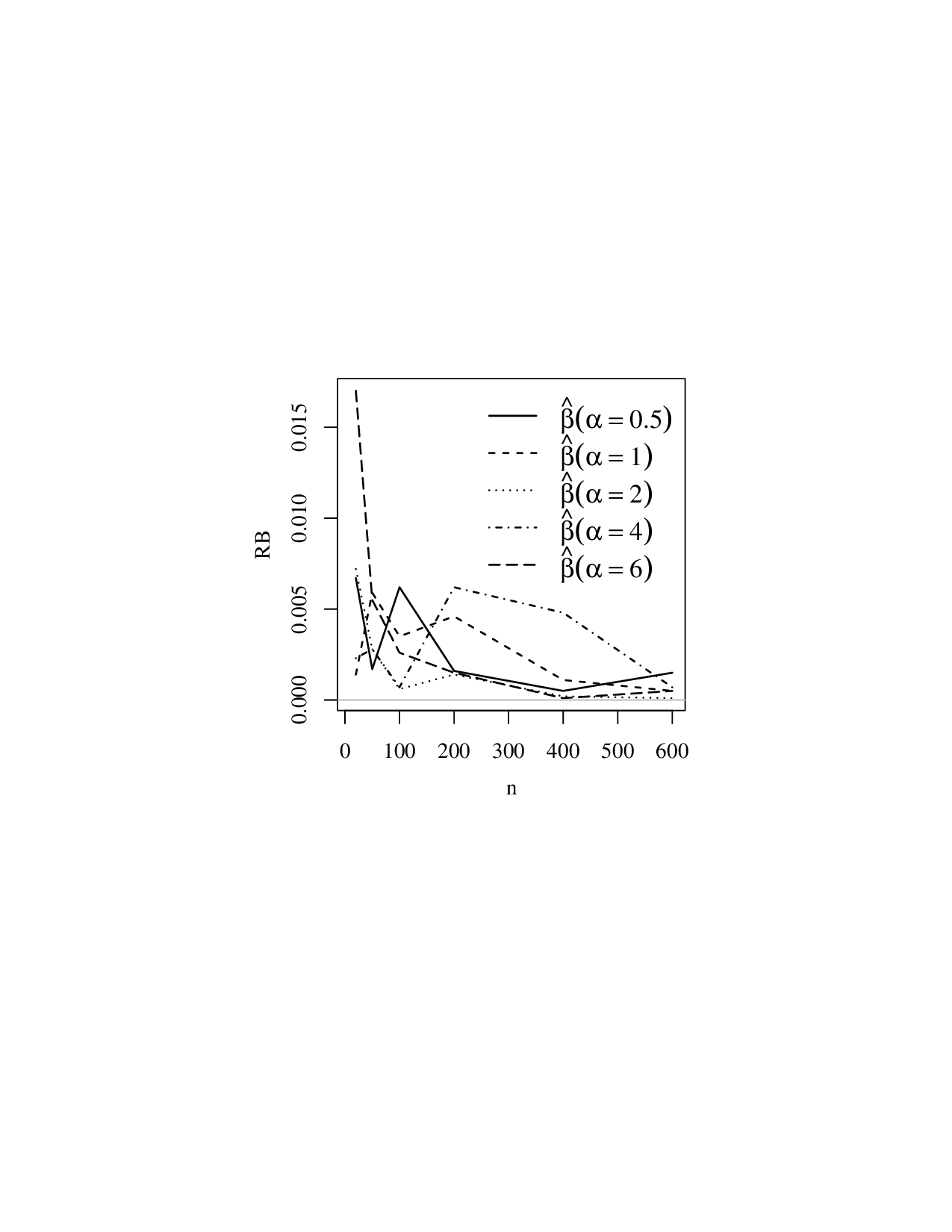}}\hspace{-0.25cm}
\subfigure[Gamma]{\includegraphics[height=5.5cm,width=5.5cm]{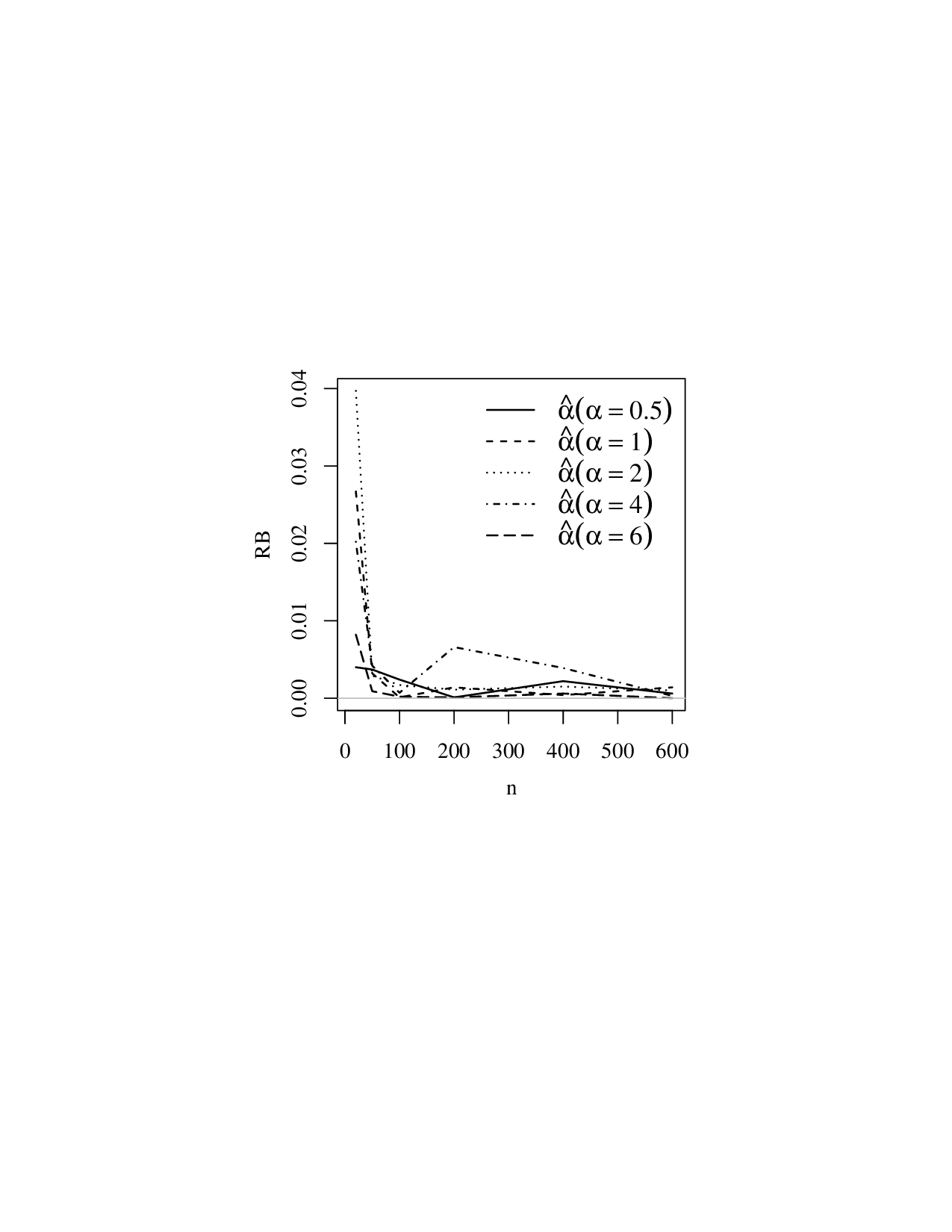}}\hspace{-0.25cm}
\subfigure[New log-generalized gamma]{\includegraphics[height=5.5cm,width=5.5cm]{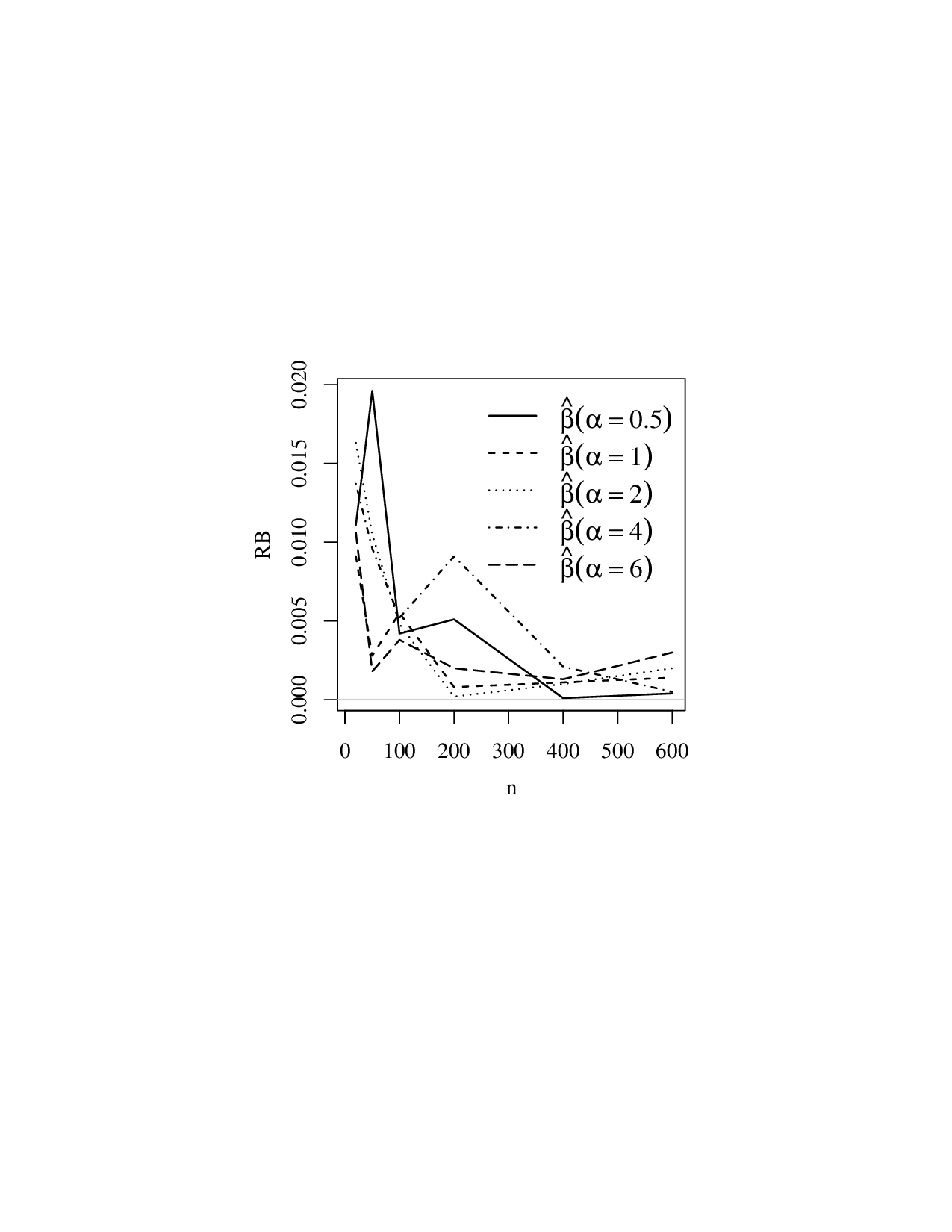}}\hspace{-0.25cm}
\subfigure[New log-generalized gamma]{\includegraphics[height=5.5cm,width=5.5cm]{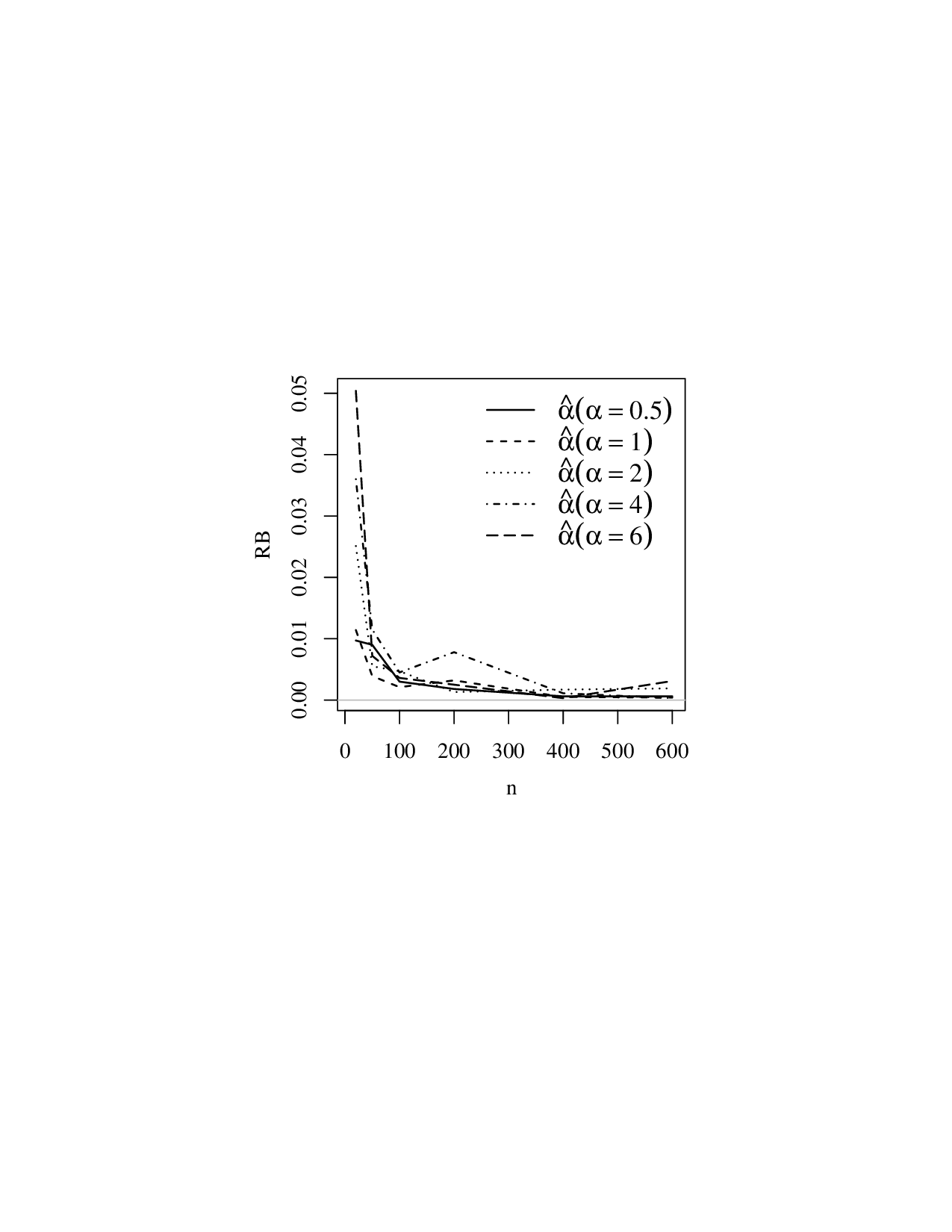}}\hspace{-0.25cm}
\subfigure[Scaled inverse chi-squared]{\includegraphics[height=5.5cm,width=5.5cm]{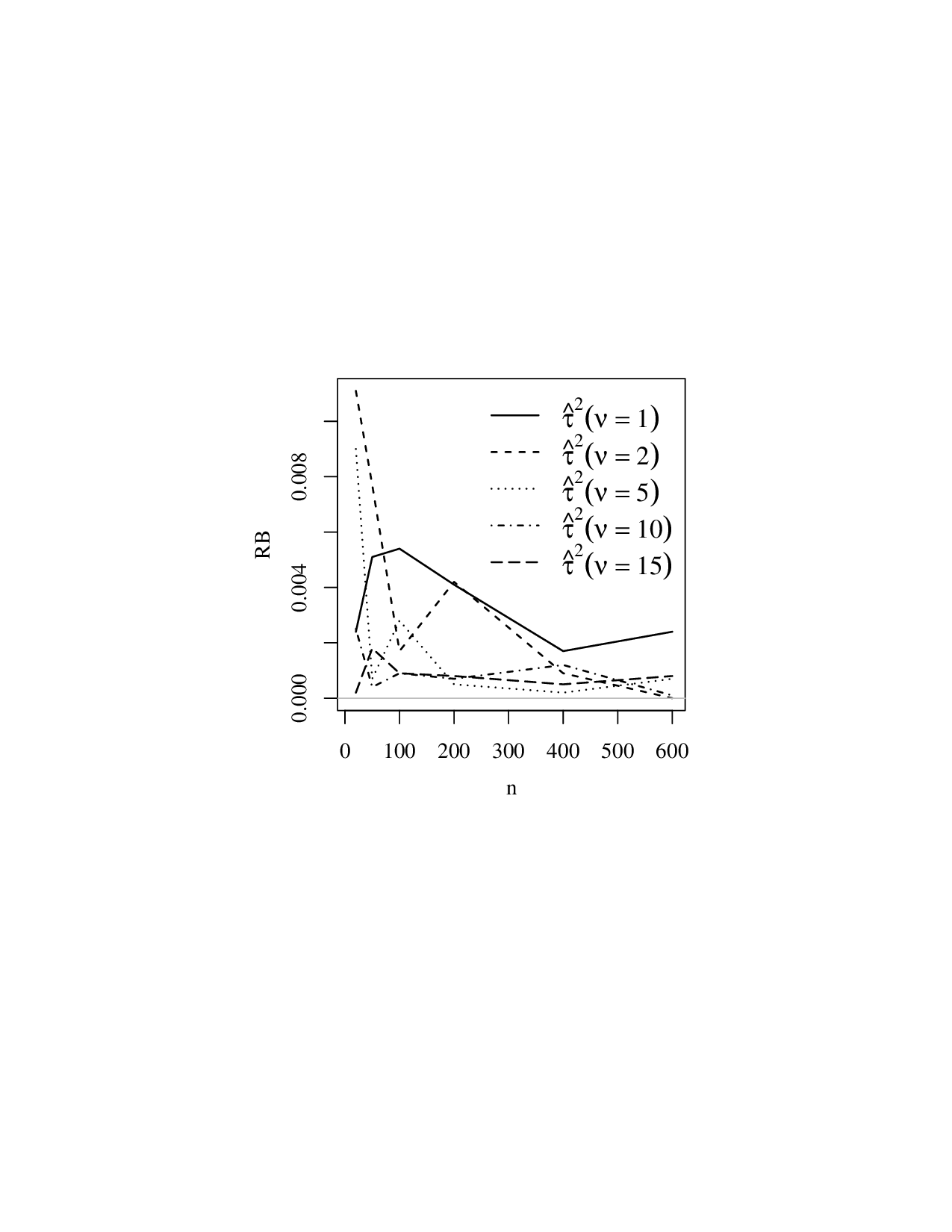}}\hspace{-0.25cm}
\subfigure[Scaled inverse chi-squared]{\includegraphics[height=5.5cm,width=5.5cm]{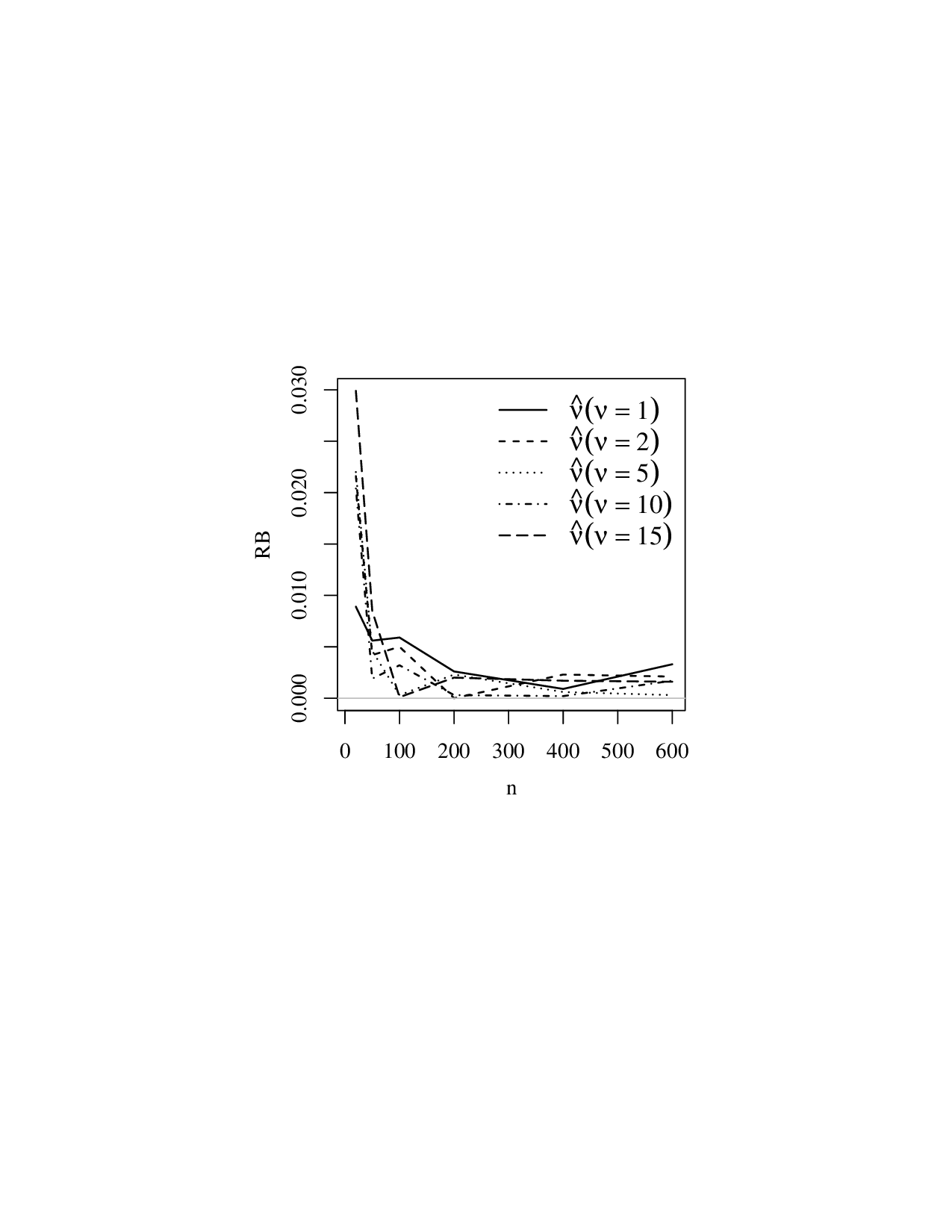}}\hspace{-0.25cm}
\vspace{-0.2cm}
\caption{Empirical RB of the bootstrap bias-reduced ML estimators for the {\color{black} indicated} distributions.}
\label{fig_dagum_MC1}
\end{figure}

\begin{figure}[H]
\vspace{-0.25cm}
\centering
\subfigure[Nakagami]{\includegraphics[height=5.5cm,width=5.5cm]{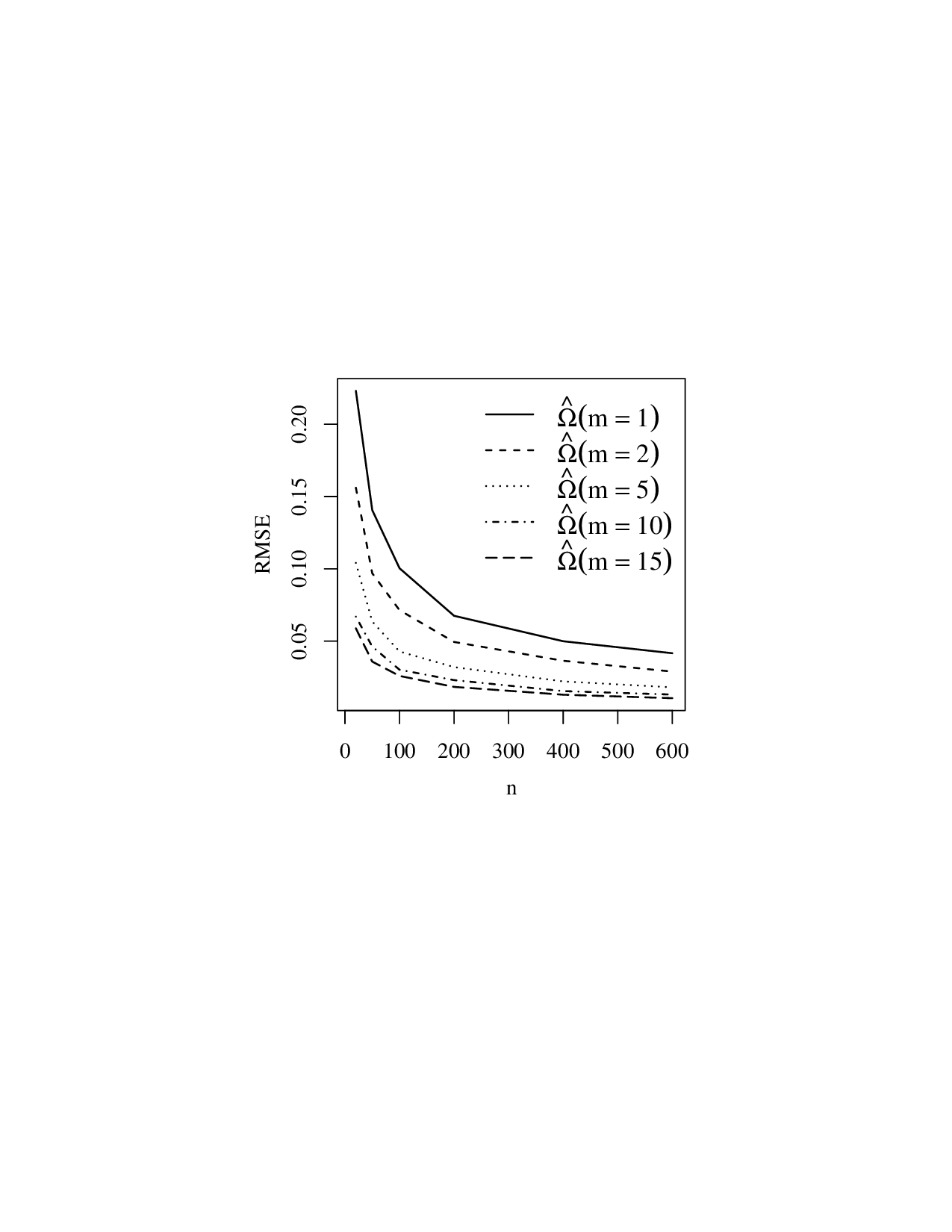}}\hspace{-0.25cm}
\subfigure[Nakagami]{\includegraphics[height=5.5cm,width=5.5cm]{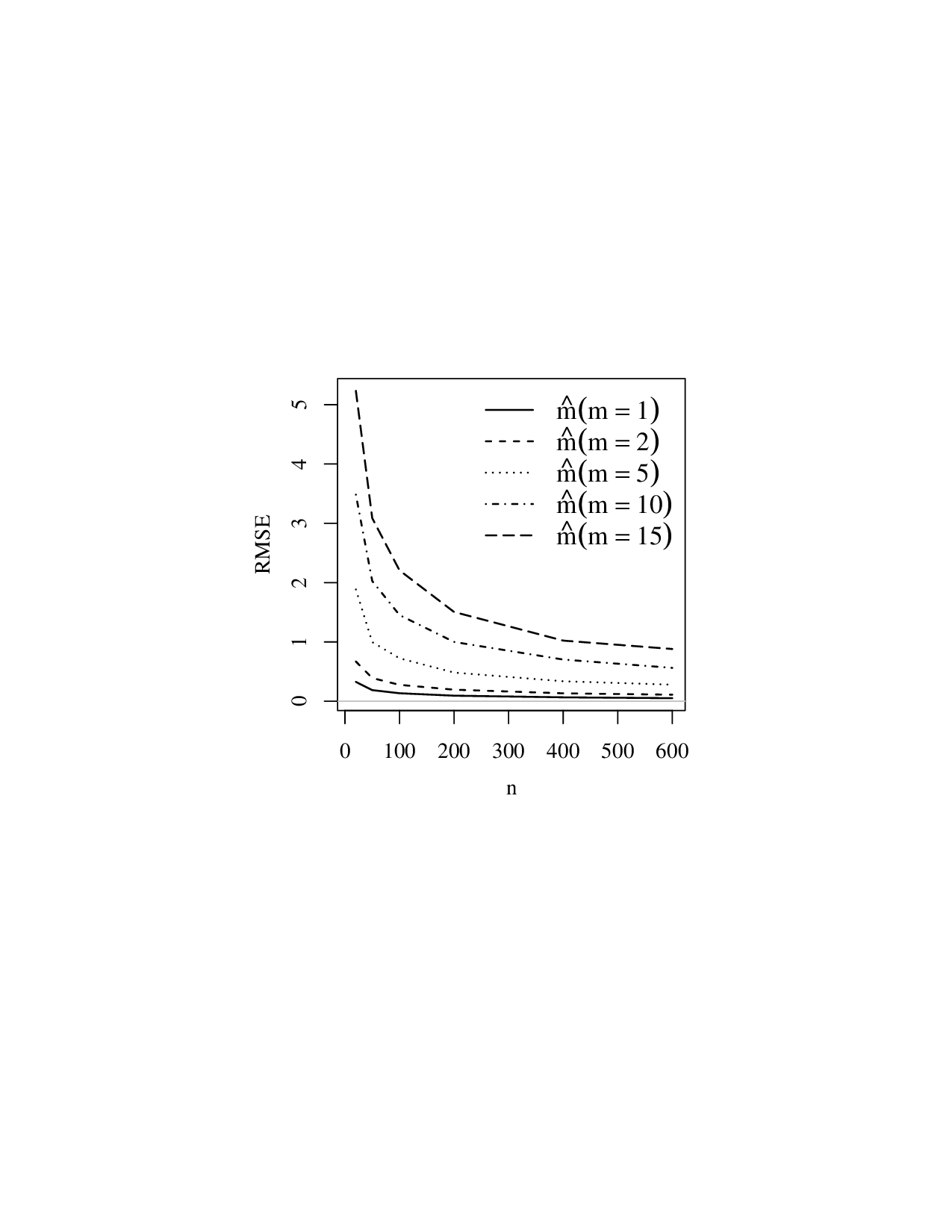}}\hspace{-0.25cm}
\subfigure[Gamma]{\includegraphics[height=5.5cm,width=5.5cm]{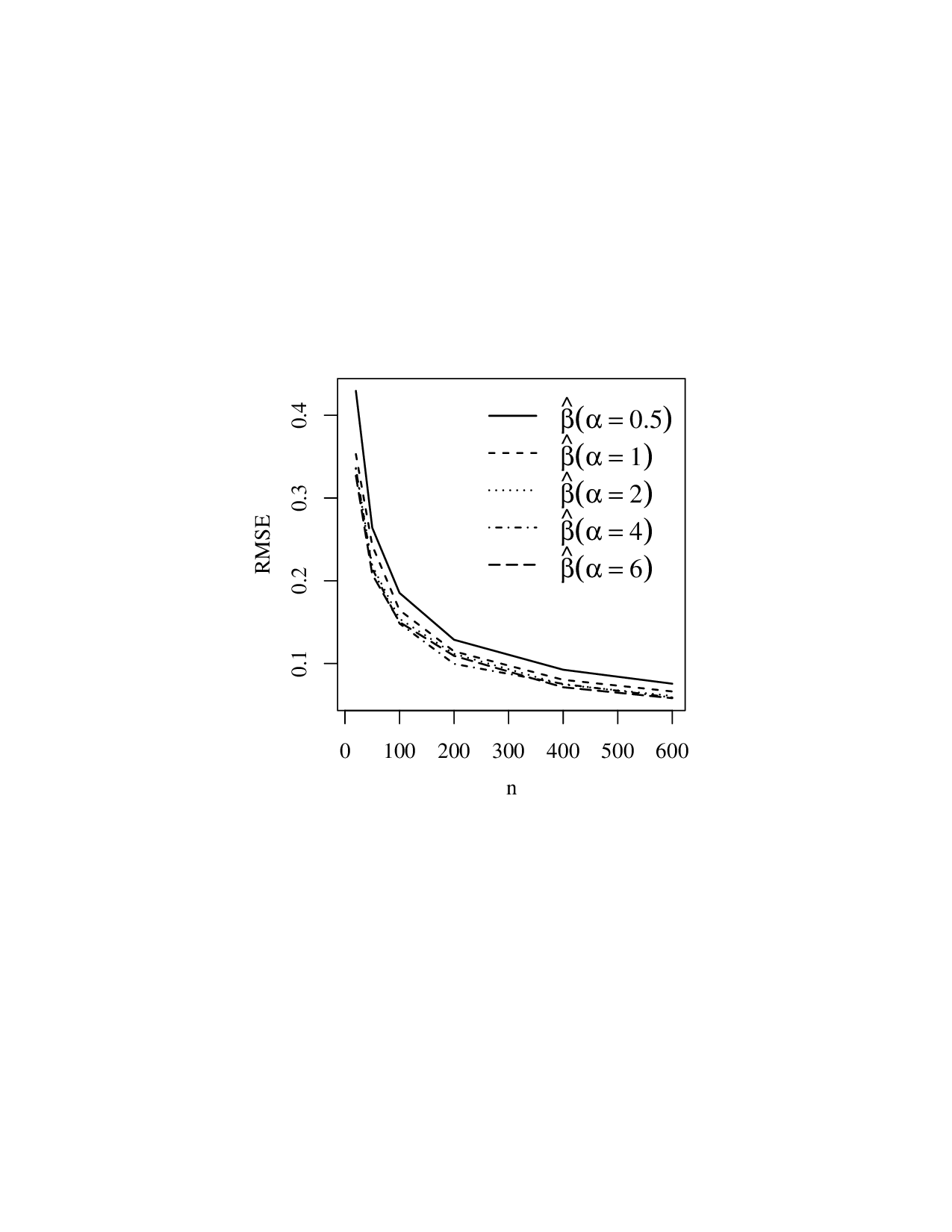}}\hspace{-0.25cm}
\subfigure[Gamma]{\includegraphics[height=5.5cm,width=5.5cm]{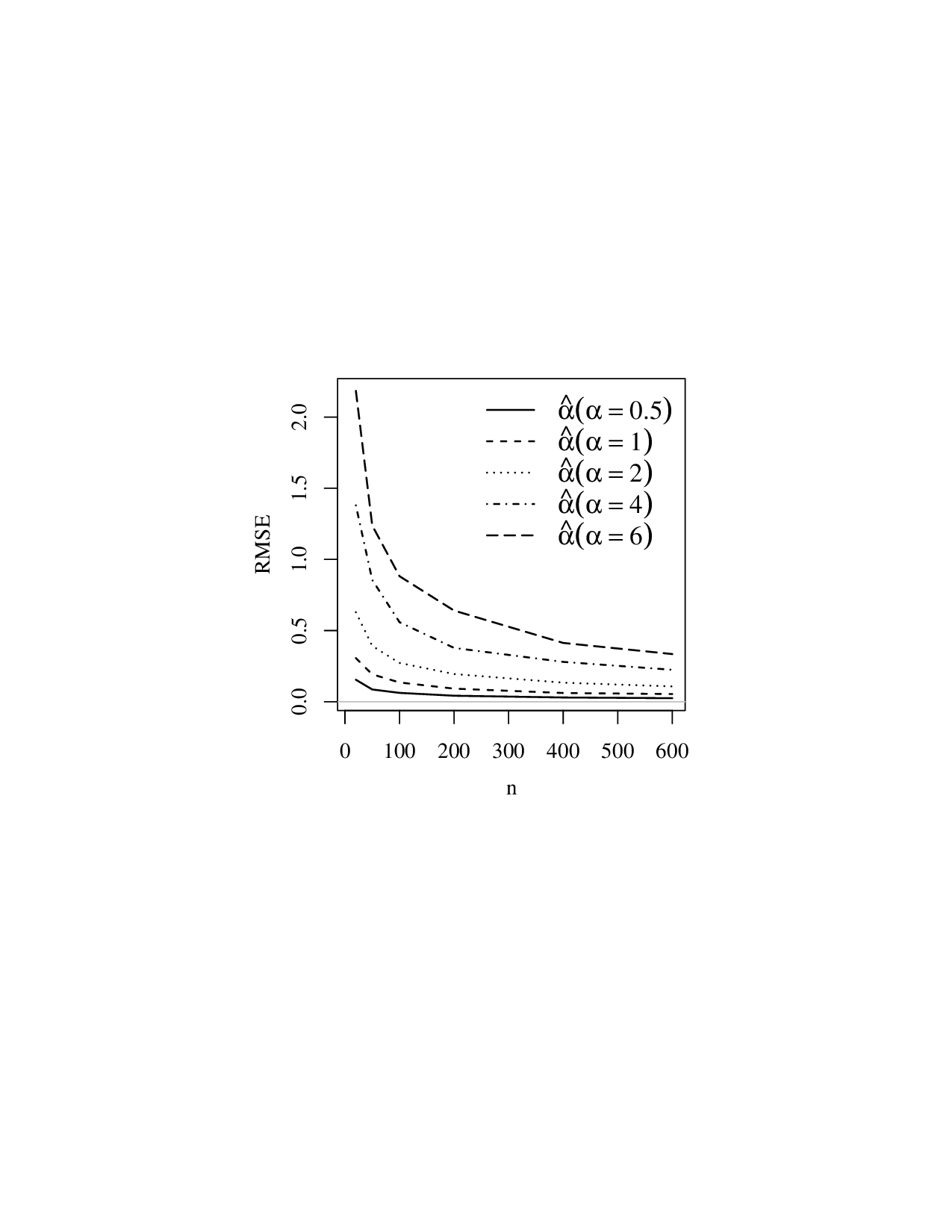}}\hspace{-0.25cm}
\subfigure[New log-generalized gamma]{\includegraphics[height=5.5cm,width=5.5cm]{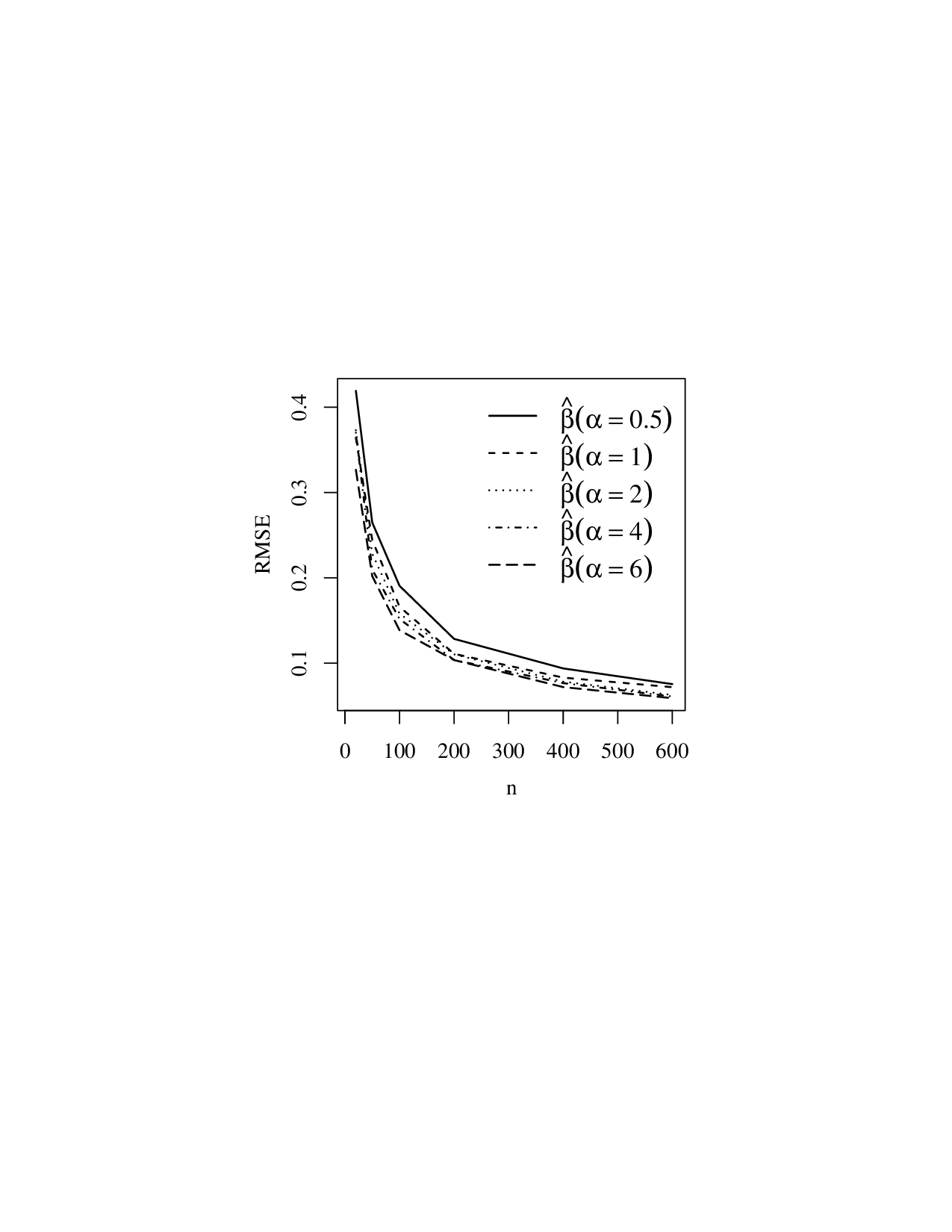}}\hspace{-0.25cm}
\subfigure[New log-generalized gamma]{\includegraphics[height=5.5cm,width=5.5cm]{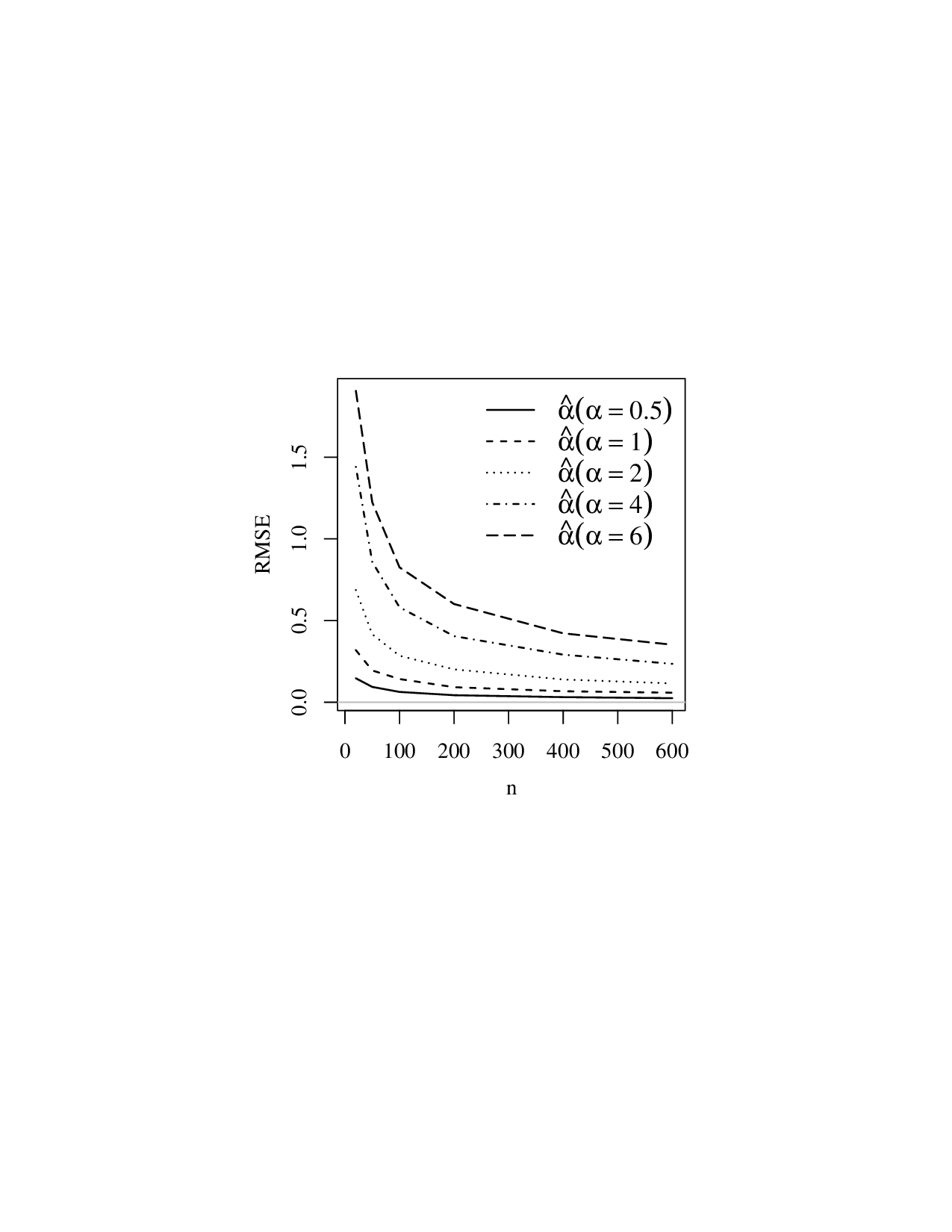}}\hspace{-0.25cm}
\subfigure[Scaled inverse chi-squared]{\includegraphics[height=5.5cm,width=5.5cm]{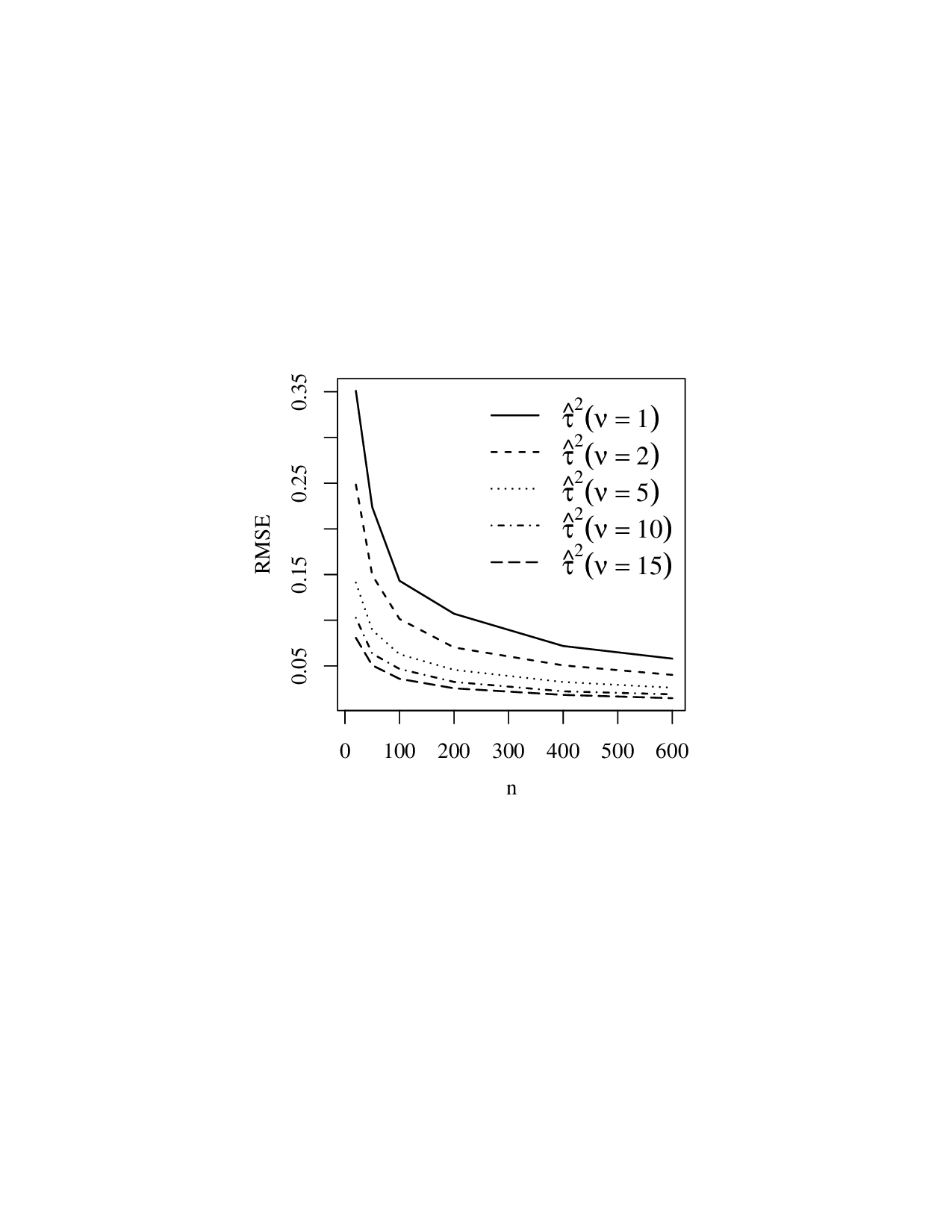}}\hspace{-0.25cm}
\subfigure[Scaled inverse chi-squared]{\includegraphics[height=5.5cm,width=5.5cm]{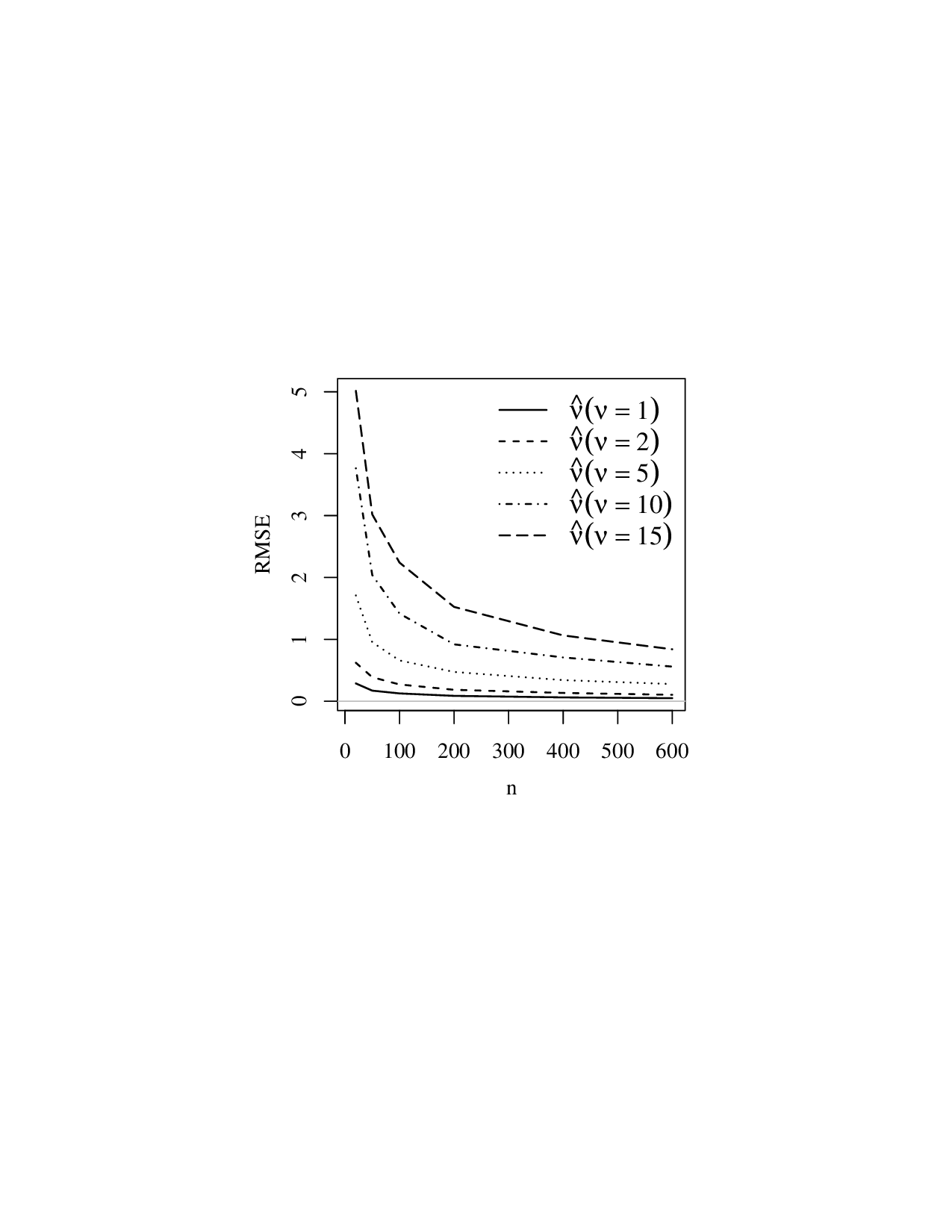}}\hspace{-0.25cm}
%
\vspace{-0.2cm}
\caption{Empirical RMSE of the bootstrap bias-reduced ML estimators for the {\color{black} indicated} distributions.}
\label{fig_dagum_MC2}
\end{figure}

%
%
%
%
%
%
%

	\paragraph{Acknowledgements}
 This study was financed in part by the
Coordenação de Aperfeiçoamento de Pessoal de Nível Superior - Brasil (CAPES) - Finance Code 001.

	\paragraph{Statements and Declarations}
	There are no conflicts of interest to disclose.




\begin{appendices}
\section{{\color{black} Some technical results} }\label{additional results}

{\color{black}
	In this section, we provide auxiliary technical results and asymptotic expansions for expectations of smooth functions of gamma random variables that facilitate the verification of the assumptions of Theorem \ref{theorem-main}.
	\begin{proposition}\label{prop-ant}
		Let
		$
		Y\sim{\rm Gamma}(\mu,1),
		$
		with $\mu>0$. Then, for every $r>0$ and every integer $m>r$, 
		\[
		\mathbb  E|Y-\mu|^r
		\leqslant 
		C_m^{\,r/m} \mu^{r/2},
		\]
		for some constant $C_m>0$ depending only on $m$.
	\end{proposition}
	
	\begin{proof}
		Let
		$
		W={(Y-\mu)}/{\sqrt{\mu}}
		$
		denote the standardized version of $Y$. 
		Since the cumulants of $W$ are
		$
		\kappa_1(W)=0,
		$
		$
		\kappa_2(W)=1,
		$
		and, for every $j\geqslant 3$,
		$
		\kappa_j(W)
		=
		(j-1)!\,\mu^{1-j/2},
		$
		we have
		\[
		\sup_{\mu\geqslant 1}
		|\kappa_j(W)|
		\leqslant 
		(j-1)!,
		\quad j\geqslant 2.
		\]
		
		Since every moment of order $m$ can be expressed as a polynomial in the
		first $m$ cumulants (see, e.g., the complete Bell polynomial formula),
		for every integer $m\geqslant 1$ there exists a constant $C_m>0$ such that
		\[
		\sup_{\mu\geqslant 1}
		\mathbb E|W|^m
		\leqslant C_m.
		\]
		Now let $r>0$ and choose an integer $m>r$. By Lyapunov's inequality,
		\[
		\mathbb E|W|^r
		\leqslant 
		(\mathbb E|W|^m)^{r/m}
		\leqslant
		C_m^{\,r/m}.
		\]
		
		Finally, since 	$
		W={(Y-\mu)}/{\sqrt{\mu}}
		$, the proof follows.
	\end{proof}

\begin{lemma}\label{lem:taylor-expectation}
	Let $\{X_n\}_{n\geqslant 1}$ be a sequence of random variables such that
	$
	\mathbb E[X_n]=\mu,
	$
	$
	\sigma_n^2={\rm Var}(X_n)\to0.
	$
	Assume that there exists $\delta>0$ such that
	$
	\mathbb E|X_n-\mu|^{2+\delta}
	=
	O\!\left(\sigma_n^{2+\delta}\right).
	$
	If $g$ is twice continuously differentiable in a neighborhood of $\mu$, then
	\[
	\mathbb E[g(X_n)]
	=
	g(\mu)
	+
	\frac{g''(\mu)}{2}\,\sigma_n^2
	+
	o(\sigma_n^2).
	\]
\end{lemma}

\begin{proof}
	%
	By Taylor's theorem,
	\[
	\mathbb E[g(X_n)]
	=
	g(\mu)
	+
	\frac{g''(\mu)}{2} \, \sigma_n^2
	+
	\mathbb E[R(X_n)],
	\]
	where
	\[
	R(x)
	=
	\frac12\bigl[g''(\mu+\theta(x)(x-\mu))-g''(\mu)\bigr](x-\mu)^2,
	\quad
	\theta(x)\in(0,1).
	\]
	
	It remains to prove that
	\begin{align}\label{rem-f}
	\mathbb E[R(X_n)]
	=
	o(\sigma_n^2).
		\end{align}
	
	Fix $\varepsilon>0$. Since $g''$ is continuous at $\mu$, there exists
	$\eta>0$ such that
	$
	|y-\mu|<\eta
	\ \Longrightarrow\ 
	|g''(y)-g''(\mu)|<\varepsilon.
	$
	
	Write
	$
	A_n=\{|X_n-\mu|\leqslant\eta\}.
	$
	Then
	\[
	|\mathbb E[R(X_n)]|
	\leqslant 
	\mathbb E\!\left[|R(X_n)|\,\mathds 1_{A_n}\right]
	+
	\mathbb E\!\left[|R(X_n)|\,\mathds 1_{A_n^c}\right].
	\]
	
	For the first term,
	\[
	\mathbb E\!\left[|R(X_n)|\,\mathds 1_{A_n}\right]
	\leqslant 
	\frac{\varepsilon}{2} \, 
	\mathbb E[(X_n-\mu)^2]
	=
	\frac{\varepsilon}{2}\,\sigma_n^2.
	\]
	
	Since $g''$ is bounded on a sufficiently small neighborhood of $\mu$,
	there exists $M>0$ such that
	$
	|g''(y)-g''(\mu)|\leqslant M
	$
	on that neighborhood. Therefore,
	\[
	\mathbb E\!\left[|R(X_n)|\,\mathds 1_{A_n^c}\right]
	\leqslant 
	\frac{M}{2} \, 
	\mathbb E\!\left[
	(X_n-\mu)^2
	\mathds 1_{\{|X_n-\mu|>\eta\}}
	\right].
	\]
	
	Using
	$
	(X_n-\mu)^2
	\mathds 1_{\{|X_n-\mu|>\eta\}}
	\leqslant
	\eta^{-\delta}|X_n-\mu|^{2+\delta},
	$
	we obtain
	\[
	\mathbb E\!\left[
	(X_n-\mu)^2
	\mathds 1_{\{|X_n-\mu|>\eta\}}
	\right]
	\leqslant 
	\eta^{-\delta}
	\mathbb E|X_n-\mu|^{2+\delta}
	\leqslant
	\eta^{-\delta}
	\{\mathbb E[(X_n-\mu)^2]\}^{1+\delta/2}
	=
	O\!\left(\sigma_n^{2+\delta}\right)
	=
	o(\sigma_n^2),
	\]
	where the last inequality follows from Lyapunov's inequality.
	
	Consequently,
	\[
	|\mathbb E[R(X_n)]|
	\leqslant
	\frac{\varepsilon}{2}\, \sigma_n^2
	+
	o(\sigma_n^2).
	\]
	
	Dividing by $\sigma_n^2$ and taking the upper limit,
	\[
	\limsup_{n\to\infty}
	\frac{|\mathbb E[R(X_n)]|}{\sigma_n^2}
	\leqslant 
	\frac{\varepsilon}{2}.
	\]
	
	Since $\varepsilon>0$ is arbitrary, \eqref{rem-f} follows. This completes the proof.
%
\end{proof}

}

\end{appendices}
\end{document}